\newcommand{\bi}{\begin{itemize}}
\newcommand{\ei}{\end{itemize}}
\newcommand{\beq}{\begin{equation}}
\newcommand{\eeq}{\end{equation}}
\newcommand{\bqn}{\begin{eqnarray*}}
\newcommand{\eqn}{\end{eqnarray*}}
\newcommand{\ba}{\begin{array}}
\newcommand{\ea}{\end{array}}
\newcommand{\bs}{\begin{small}}
\newcommand{\es}{\end{small}}
\newcommand{\nn}{\nonumber}
\newtheorem{theorem}{Theorem}[section]
\newtheorem{lemma}[theorem]{Lemma}
\newtheorem{proposition}[theorem]{Proposition}
\newenvironment{proof}[1][Proof]{\begin{trivlist}
\item[\hskip \labelsep {\bfseries #1}]}{\end{trivlist}}
\newcommand{\qed}{\nobreak \ifvmode \relax \else
      \ifdim\lastskip<1.5em \hskip-\lastskip
      \hskip1.5em plus0em minus0.5em \fi \nobreak
      \vrule height0.75em width0.5em depth0.25em\fi}
\DeclareMathOperator*{\argmin}{arg\,min}
\DeclareMathOperator*{\argmax}{arg\,max}
\begin{document}
\title{On Quantizer Design for Distributed Bayesian Estimation in Sensor Networks}
\author{Aditya~Vempaty\thanks{This work was supported in part by ARO under Award W911NF-12-1-0383, AFOSR under Award FA9550-10-1-0458, and NSF under Award 1218289. Part of this work was presented at the 38th IEEE International Conference on Acoustics, Speech, and Signal Processing (ICASSP-2013).

The authors are with the Department of Electrical Engineering and Computer Science, Syracuse University, Syracuse, NY USA. 

(email: avempaty@syr.edu; hhe02@syr.edu; bichen@syr.edu; varshney@syr.edu)}, Hao~He, Biao~Chen, and Pramod~K.~Varshney}
\date{}
\maketitle

\begin{abstract}
We consider the problem of distributed estimation under the Bayesian criterion and explore the design of optimal quantizers in such a system. We show that, for a conditionally unbiased and efficient estimator at the fusion center and when local observations have identical distributions, it is optimal to partition the local sensors into groups, with all sensors within a group using the same quantization rule. When all the sensors use identical number of decision regions, use of identical quantizers at the sensors is optimal. When the network is constrained by the capacity of the wireless multiple access channel over which the sensors transmit their quantized observations, we show that binary quantizers at the local sensors are optimal under certain conditions. Based on these observations, we address the location parameter estimation problem and present our optimal quantizer design approach. We also derive the performance limit for distributed location parameter estimation under the Bayesian criterion and find the conditions when the widely used threshold quantizer achieves this limit. We corroborate this result using simulations. We then relax the assumption of conditionally independent observations and derive the optimality conditions of quantizers for conditionally dependent observations. Using counter-examples, we also show that the previous results do not hold in this setting of dependent observations and, therefore, identical quantizers are not optimal.
\end{abstract}

\begin{keywords}
Distributed Estimation, Optimal Quantizer Design, Posterior Cram\'{e}r Rao Lower Bound (PCRLB)
\end{keywords}

\section{Introduction}
\label{intro}
Distributed parameter estimation from quantized data has been an active area of research \cite{reibman_tcomm93,Marano&etal:07sp,chen_tsp10,chen_tsp10_2,wu:one-bit}. In a typical distributed estimation framework\footnote{In the literature, the terms `distributed' and `decentralized' have often been used interchangeably. In this paper, we use the term `distributed' and it refers to the case when the local sensors perform local processing before sending the data to a central unit.}, local sensors send their data to a fusion center. At the fusion center, an estimation algorithm is applied to estimate the unknown parameter based on the data received from different local sensors. However, due to bandwidth/energy constraints, local observations are often quantized before they are transmitted to the fusion center. Identical quantizers at the sensors have traditionally been used by researchers as it simplifies the design problem \cite{chen_tsp10} \cite{kar_tsp12}. However, relatively little is known about the optimality of these identical quantizers. For decentralized detection, Tsitsiklis \cite{Tsitsiklis88} showed the asymptotic optimality of identical quantizers with conditionally independent and identically distributed sensor observations. In \cite{reibman_tcomm93}, the authors considered the design of optimal quantizers for distributed estimation under different distortion criteria. Using the minimax criterion, optimal quantizers have been found in \cite{wu:one-bit} and \cite{chen_tsp10_2}. The maximum likelihood estimator has been used at the Fusion Center (FC) in \cite{swami_icisip05} for which the optimal quantizers have been shown to be the score functions which depend on the true value of the parameter. A discussion on the design of quantizers with design goals of bandwidth efficiency, scalability, and robustness to network changes can be found in \cite{xiao&etal:06spm}. In \cite{gubner_tit93}, an algorithm was developed for the design of a non-linear multiple-sensor distributed estimation system by partitioning the real line for quantization. 

When considering the problem of distributed parameter estimation in sensor networks, besides the energy constraints, we need to be aware of the communication limitations of the network. The amount of information from each sensor is limited by the number of bits it transmits to the fusion center. However, finite channel throughput restricts the number of bits which the sensors can transmit to the fusion center. Chamberland and Veeravalli \cite{Chamberland&Veeravalli:SP03} have addressed the problem of decentralized detection in sensor networks under such a rate-constraint. Ribeiro and Giannakis \cite{gian_tsp06_1,gian_tsp06_2} have also addressed the problem of bandwidth-constrained distributed estimation in wireless sensor networks. However, most of the above works either deal with the case of identical quantizers, or consider the case of estimating a deterministic unknown parameter where the optimal quantizer depends on the unknown itself. In our work, we find the optimality conditions under the widely used assumption of identical quantizers. We also address quantizer design for the
Bayesian setup where average distortion is considered as the cost function, and the optimal quantizers are not dependent on the unknown.

Building on our preliminary work \cite{Vempaty_icassp13_opt}, in this paper, we study the problem of quantizer design in a distributed Bayesian estimation system. The major contributions of this work can be summarized as follows:

\bi
\item We derive the optimality conditions for an arbitrary cost function when the observations are conditionally independent. For an efficient and conditionally unbiased estimator at the fusion center, we show that it is optimal to partition the set of sensors into groups, with each group using an identical quantizer. 
\item We study quantizer design for distributed estimation under a bit rate constraint in the sensor network and determine the conditions under which it is optimal for the sensors to use binary quantizers. For the case of Gaussian observations, we show that the conditions are satisfied in the low signal-to-noise ratio (SNR) regime. 
\item We consider the location parameter estimation problem and design the optimal binary quantizer using calculus of variations. We evaluate the performance limit of such a system and derive the conditions under which the threshold quantizer attains this performance limit. 
\item We also consider the dependent observation model and derive the optimality conditions by using a hierarchical dependence framework.
\ei

The remainder of the paper is organized as follows: In Sec.~\ref{sec:prob}, we describe the distributed estimation model used in the paper and formulate the optimization problem mathematically. We derive the optimality conditions on the quantizers for an arbitrary cost function under the assumption of conditionally independent observations in Sec.~\ref{sec:opt_cond_iid}. In Sec.~\ref{sec:unbiased}, we prove that it is optimal to partition the set of sensors into groups using identical quantizers for conditionally unbiased and efficient estimators. We shift our attention to a capacity constrained wireless sensor network in Sec.~\ref{sec:rate} and determine the conditions under which binary quantizers are optimal.  In Sec.~\ref{sec:location}, we consider the location parameter estimation problem and design the optimal binary quantizer. We relax the assumption of conditionally independent observations in Sec.~\ref{sec:dep} and derive the optimality conditions. Concluding remarks are provided in Sec.~\ref{sec:conc}.

\section{Problem formulation}
\label{sec:prob}
Consider a distributed estimation problem where the goal is to estimate a random scalar parameter $\theta$ at the fusion center (FC). The parameter $\theta$ has a prior probability density function
(pdf) $p(\theta)$ where $\theta \in \Theta$. As shown in Fig. \ref{fig:model}, there are a total of $N+1$ sensors $S_0, S_1,\cdots, S_N$ in the network and sensor $S_0$ plays the role of FC whereas the other $N$ sensors are peripheral sensors. Each sensor $S_i$, for $i = 0, 1,\cdots,N$ receives a local observation $Y_i$ which is a noisy realization of the parameter $\theta$ and takes values in a set $\mathcal{Y}_i$. We assume that the joint distribution of $\textbf{Y}=[Y_0,Y_1,\cdots,Y_N]$ conditioned on $\theta$ is known to the FC for all $\theta$. In this paper, until Sec. \ref{sec:dep}, we assume that $Y_i$'s are conditionally independent and identically distributed, hence the overall likelihood function is $p(\textbf{y}|\theta)=\prod_{i=0}^N p(y_i|\theta)$. 

\begin{figure}[htb]
\centering
\includegraphics[width=2.5in,height=!]{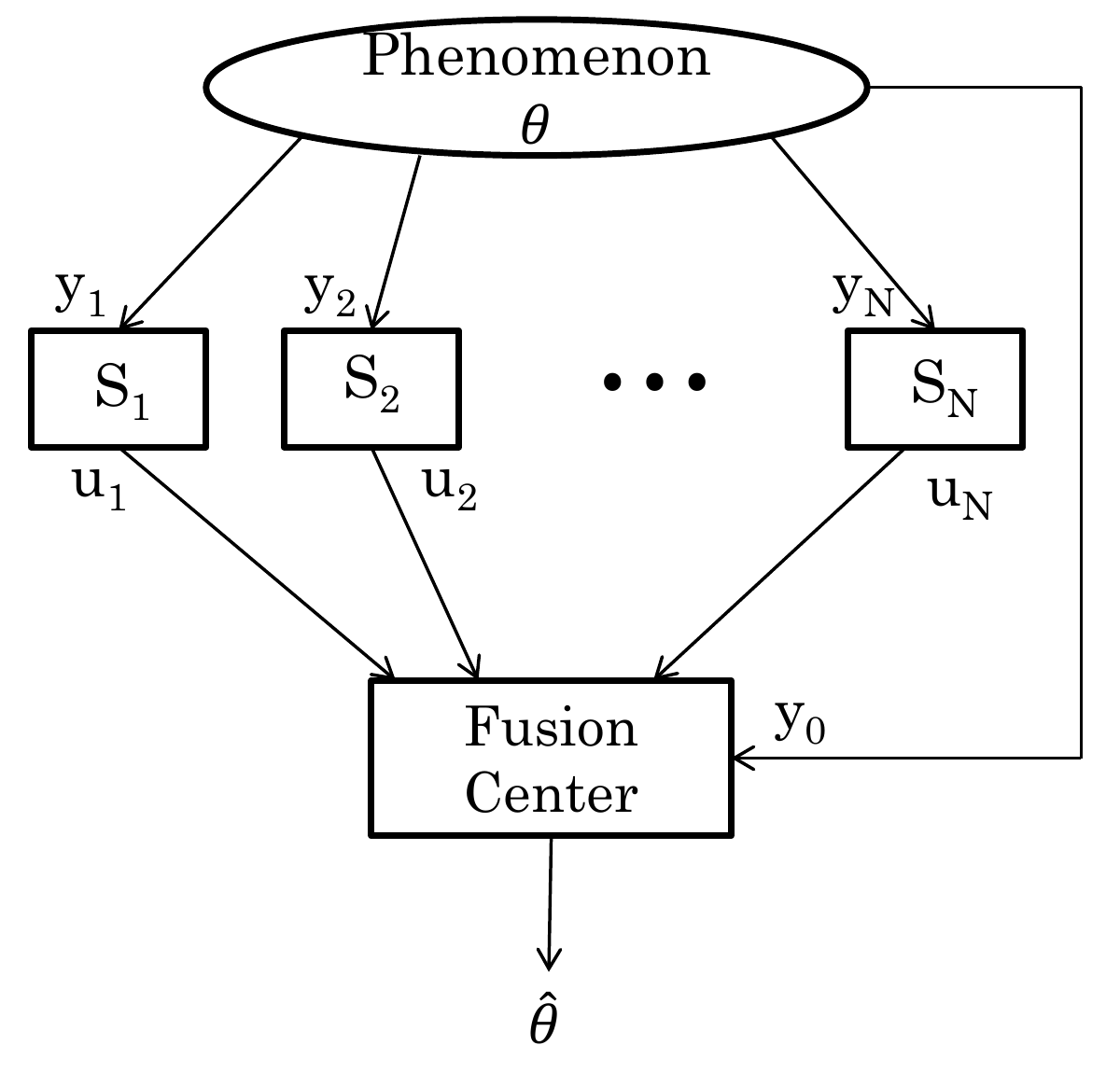}
\caption{System model}
\label{fig:model}
\end{figure}

Each sensor $S_i$, $i \neq 0$, quantizes its observation $y_i$, which is a realization of the random variable $Y_i$, using a local quantizer $\gamma_i(\cdot)$. The quantizer output $u_i = \gamma_i(y_i) \in \{1, \cdots,D_i\}$ is transmitted to the FC error free. Let $L$ denote the number of distinct values of $D_i$ (the number of quantization regions for sensor $S_i$). The FC uses $u_1,\cdots, u_N$ along with its own observation $y_0$ (realization of $Y_0$) and estimates the random parameter $\theta$ as $\hat{\theta} = \gamma_0(y_0, u_1,\cdots, u_N) \in \Theta$. Here $\gamma_0:\mathcal{Y}_0\times \prod_{i}\{1,\cdots,D_i\} \to \Theta$ is a function that will be referred to as the estimator. For $i = 1, 2,\cdots,N$, we use $\Gamma_i$ to denote
the set of all possible quantizers of sensor $S_i$. The collection $\gamma = (\gamma_1,\gamma_2,\cdots ,\gamma_N)$ of quantizers will be referred to as a strategy. The estimator is assumed to be given and, therefore, the strategy only involves local quantizers. We let $\Gamma = \Gamma_1\times\Gamma_1\times\cdots \Gamma_N$, which is the set of all strategies. For $i \neq 0$, once a quantizer  $\gamma_i \in \Gamma_i$ is fixed, the quantizer output $u_i$ at sensor $S_i$ can be viewed as a realization of a random variable $U_i$ defined by $U_i = \gamma_i(Y_i)$. Clearly, the probability distribution of $U_i$ depends on the distribution of $Y_i$ and on the choice of the quantizer $\gamma_i$. Similarly, once the estimator and the strategy are fixed, the global estimate $\hat{\theta}$ becomes a random variable defined by $\hat{\theta} =\gamma_0(Y_0, U_1,\cdots, U_N)$.

In the most general Bayesian formulation, we define a cost function $C : \Theta\times \prod_i\{1, \cdots,D_i\} \times \Theta \to \mathcal{R}$, with $C(\hat{\theta}, u_1, \cdots, u_N, \theta)$ representing the cost associated with an FC estimate $\hat{\theta}$ and quantizer outputs $u_1, \cdots, u_N$, when the true parameter is $\theta$. For any given strategy $\gamma \in \Gamma$, its Bayesian cost (or risk) $J(\gamma)$ is defined as 

\begin{equation}
J(\gamma) = E[C(\hat{\theta}, U_1,\cdots, U_N, \theta)],
\end{equation}

where the arguments of $C(\cdot)$ are all random variables. An equivalent expression for $J(\gamma)$, in which the dependence on $\gamma$ is more explicit is

\begin{align}
\label{cost}
J(\gamma)=\int_\Theta p(\theta)E[C(\gamma_0(Y_0,\gamma_1(Y_1),\cdots,\gamma_N(Y_N)),\gamma_1(Y_1),\cdots,\gamma_N(Y_N),\theta)|\theta]d\theta
\end{align}

\qquad The optimal quantizers are those which minimize $J(\gamma)$, herein referred to as the Bayesian risk function. For a given $\gamma_0(\cdot)$, the problem can be stated as,
\vspace{-.27cm}
\begin{equation}
\label{opt}
\gamma*=\argmin_{\gamma \in \Gamma} J(\gamma).
\end{equation}

\section{Optimality conditions for conditionally independent observations}
\label{sec:opt_cond_iid}
In this section, we provide optimality conditions for quantizers for an arbitrary cost function under the assumption of conditionally independent observations. We first provide a proposition in Sec. \ref{prel} which will be used for deriving the optimality conditions. The results in this section are derived using an approach similar to \cite{Tsitsiklis:bookchapter}.

\subsection{Preliminaries}
\label{prel}
Let $\theta$ be a random parameter to be estimated with prior pdf $p(\theta)$ and $X$ be a random variable, taking values in a set $\mathcal{X}$, with known conditional distribution given $\theta$. Let $D$ be some positive integer, and $\Delta$ be the set of all functions $\delta : \mathcal{X} \to \{1, \cdots ,D\}$. Consistent with our earlier terminology, we shall call such functions quantizers.

\begin{proposition}
\label{prop1}
Let $Z$ be a random variable taking values in a set $\mathcal{Z}$ and assume that, conditioned on $\theta$, $Z$ is independent of $X$. Let $F : \{1, \cdots ,D\} \times \mathcal{Z} \times \Theta \to \mathcal{R}$ be a given cost function. Let $\delta^*$ be an element of $\Delta$. Then $\delta^*$ minimizes $E[F(\delta(X), Z, \theta)]$ over all $\delta \in \Delta$ if and only if
\begin{equation}
\label{prop1_opt}
\delta^*(X)=\argmin_{d=1,\cdots, D}\int_\theta a(\theta ,d)p(\theta|X)d\theta \qquad \text{with probability 1}
\end{equation}
where 
\begin{equation}
a(\theta ,d)=E[F(d,Z,\theta)|\theta] \qquad \text{$\forall$ $\theta$, $d$.}
\end{equation} 
\end{proposition}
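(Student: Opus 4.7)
The plan is to reduce the joint expectation to a pointwise minimization over $x$ by exploiting the conditional independence of $Z$ and $X$ given $\theta$, and then invoke a standard ``minimize inside the integral'' argument.

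First, I would expand the objective using the joint density $p(x,z,\theta)=p(\theta)p(x\mid\theta)p(z\mid\theta)$, which factors thanks to the hypothesis that $Z\perp X\mid\theta$. Integrating out $z$ first gives
\begin{equation*}
\int F(\delta(x),z,\theta)\,p(z\mid\theta)\,dz \;=\; E[F(\delta(x),Z,\theta)\mid\theta] \;=\; a(\theta,\delta(x)),
\end{equation*}
so
\begin{equation*}
E[F(\delta(X),Z,\theta)] \;=\; \int\!\!\int a(\theta,\delta(x))\,p(\theta)\,p(x\mid\theta)\,d\theta\,dx.
\end{equation*}
Switching to the posterior factorization $p(\theta)p(x\mid\theta)=p(x)p(\theta\mid x)$, this becomes
\begin{equation*}
E[F(\delta(X),Z,\theta)] \;=\; \int p(x)\left[\int a(\theta,\delta(x))\,p(\theta\mid x)\,d\theta\right] dx \;=\; E\!\left[g(X,\delta(X))\right],
\end{equation*}
where $g(x,d)\defeq \int a(\theta,d)\,p(\theta\mid x)\,d\theta$. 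This is the key reduction.

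Next, I would note that the only coupling between different values of $x$ has been eliminated: for each fixed $x$, the inner bracket depends on $\delta$ only through the single value $\delta(x)\in\{1,\ldots,D\}$. Therefore the integral is minimized by choosing, separately for each $x$, the value $d\in\{1,\ldots,D\}$ that minimizes $g(x,d)$. This gives exactly the condition~(\ref{prop1_opt}), and proves the ``if'' direction.

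For the ``only if'' direction, suppose $\delta^*$ is optimal but $\delta^*(x)\neq \argmin_d g(x,d)$ on some measurable set $B$ with $P(X\in B)>0$. On $B$, define $\tilde\delta(x)$ to be the (measurably chosen) minimizer of $g(x,\cdot)$, and set $\tilde\delta=\delta^*$ on $B^c$. Then $g(x,\tilde\delta(x))\le g(x,\delta^*(x))$ everywhere with strict inequality on $B$, so $E[g(X,\tilde\delta(X))]<E[g(X,\delta^*(X))]$, contradicting optimality of $\delta^*$. Hence $\delta^*(X)$ must be an argmin with probability one. The only subtle point is measurability of $\tilde\delta$, which follows routinely from the finiteness of the action set $\{1,\ldots,D\}$: one may take $\tilde\delta(x)$ to be the smallest index attaining the minimum. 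I do not anticipate any serious obstacle; the proof is essentially a conditioning argument followed by pointwise optimization, and the main thing to be careful about is the correct use of conditional independence to collapse the $z$-integral into the function $a(\theta,d)$.
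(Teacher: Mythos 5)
Your proof is correct and takes essentially the same route as the paper: conditional independence of $Z$ and $X$ given $\theta$ is used to collapse the $z$-integral into $a(\theta,d)$, after which the expectation decouples into a pointwise minimization over each value of $x$. You are somewhat more explicit than the paper about the ``only if'' direction (the swap argument and the measurable selection via the smallest minimizing index), but the underlying argument is identical.
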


\begin{proof}
The minimization of $E[F(\delta(X), Z, \theta)]$ over all $\delta \in \Delta$ can be achieved by fixing a value of $X$ and minimizing the expression $E[F(d, Z, \theta)|X]$, over all $d \in \{1,\cdots,D\}$. In other words, the mapping $\delta(X)$ can be determined for every fixed value of $X$. Therefore, it is equivalent to requiring that $\delta(X)$ minimizes $E[F (d, Z, \theta)|X]$, over all $d \in \{1,\cdots , D\}$, with probability 1. The expression being minimized can be re-written as 
$$E[E[F(d,Z,\theta)|\theta,X]|X]$$

which by conditional independence of $X$ and $Z$, is equal to
\begin{equation}
E[E[F(d,Z,\theta)|\theta,X]|X]=\int_\theta E[F(d,Z,\theta)|\theta]p(\theta|X)d\theta.
\end{equation} 
Therefore, conditional independence decouples the design of $\delta^*(X)$ from $Z$, i.e., $\delta^*(X)$ depends on $Z$ only through $a(\theta,d)$.
\end{proof}

We now use the above result to derive the optimality conditions for the quantizers. 

\subsection{Optimality conditions}
\label{cond}
The following proposition gives the necessary conditions for the optimal strategy that minimizes the Bayesian risk $J(\gamma)$ given in \eqref{cost}. 
\begin{proposition}
\label{prop2}
For $i\neq0$, suppose that $\gamma_j \in \Gamma_j$ has been fixed for all $j \neq i$. Then $\gamma_i$ minimizes $J(\gamma)$ over the set $\Gamma_i$ only if
\begin{equation}
\label{prop2_opt}
\gamma_i(Y_i)=\argmin_{d=1,\cdots,D_i}\int_\theta a(\theta,d)p(\theta|Y_i)d\theta \qquad \text{with probability 1}
\end{equation}
where for any $\theta$ and $d$,
\begin{align}
a(\theta,d)=E[C(U_0,U_1,\cdots,U_{i-1},d,U_{i+1},\cdots,U_N,\theta)|\theta] ,
\end{align}
and where each $U_i$, $i \neq 0$ is a random variable defined by $U_i = \gamma_i(Y_i)$ and $U_0 = \gamma_0(Y_0, U_1, \cdots , U_{i-1}, d, U_{i+1},\cdots,U_N)$.
\end{proposition}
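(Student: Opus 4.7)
The plan is to reduce Proposition \ref{prop2} to Proposition \ref{prop1} by a careful identification of random variables. Holding $\gamma_j$ fixed for $j\neq i$ turns the remaining design problem into precisely the single-quantizer problem handled in Section \ref{prel}, so the work is almost entirely bookkeeping: pinning down what plays the role of $X$, of $Z$, and of the cost $F$.

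I will set $X = Y_i$ and $\delta = \gamma_i$, and take $Z$ to be the tuple of all other observations, $Z = (Y_j)_{j\neq i}$, living in $\mathcal{Z} = \prod_{j\neq i} \mathcal{Y}_j$. The critical hypothesis that $Z$ is conditionally independent of $X$ given $\theta$ follows immediately from the standing assumption that $Y_0,\dots,Y_N$ are conditionally independent given $\theta$. Next, since every $\gamma_j$ with $j\neq i$ is fixed and deterministic, the quantities $U_j=\gamma_j(Y_j)$ for $j\neq i$ are deterministic functions of $Z$, and so is the fusion output $U_0 = \gamma_0(Y_0,U_1,\dots,U_{i-1},d,U_{i+1},\dots,U_N)$ once the $i$-th input $d$ is specified. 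I then define
\begin{equation}
F(d, Z, \theta) = C\bigl(\gamma_0(Y_0,U_1,\dots,U_{i-1},d,U_{i+1},\dots,U_N),\, U_1,\dots,U_{i-1},d,U_{i+1},\dots,U_N,\, \theta\bigr),
\end{equation}
which is a legitimate map $\{1,\dots,D_i\}\times\mathcal{Z}\times\Theta\to\mathbb{R}$ because each $U_j$ on the right-hand side is a function of a component of $Z$.

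With these identifications, the expression for $J(\gamma)$ in \eqref{cost} becomes $J(\gamma) = E[F(\gamma_i(Y_i), Z, \theta)]$, which is exactly the objective addressed by Proposition \ref{prop1}. Invoking that proposition, a necessary condition for $\gamma_i$ to be optimal is
\begin{equation}
\gamma_i(Y_i) = \argmin_{d=1,\dots,D_i}\int_\theta a(\theta,d)\,p(\theta|Y_i)\,d\theta \quad\text{w.p.\ 1},
\end{equation}
where $a(\theta,d) = E[F(d,Z,\theta)\mid\theta]$. Unpacking the definition of $F$ recovers exactly the stated formula $a(\theta,d) = E[C(U_0,U_1,\dots,U_{i-1},d,U_{i+1},\dots,U_N,\theta)\mid\theta]$, so the proposition follows.

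The only genuine subtlety, and what I expect to be the main obstacle, is the notational verification that the conditional independence condition of Proposition \ref{prop1} is indeed satisfied in our setup: even though $U_0$ depends on $Y_0$ (which is part of $Z$) \emph{and} on $d$ (the argument to $\delta$), it does \emph{not} depend on $Y_i$ directly once $d$ is held fixed, so the conditional independence of $Z$ and $X=Y_i$ given $\theta$ is preserved. Packaging $U_0$ inside $F$ rather than treating it as a separate random variable is what makes this clean, and is the small but essential move in the reduction.
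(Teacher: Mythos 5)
Your proof is correct and follows essentially the same route as the paper: reduce to Proposition \ref{prop1} with $X=Y_i$, package the fusion output $U_0$ inside the cost $F$, and use conditional independence given $\theta$ to justify the hypothesis on $Z$. The only cosmetic difference is that you take $Z=(Y_j)_{j\neq i}$ (the raw observations) while the paper takes $Z=(Y_0,U_1,\dots,U_{i-1},U_{i+1},\dots,U_N)$; since the $U_j$ are deterministic functions of the $Y_j$, the two identifications are equivalent and yield the same $a(\theta,d)$.
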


\begin{proof}
Observe that the function to be minimized is 
$$E[C(U_0, U_1, \cdots , U_{i-1}, \gamma_i(Y_i), U_{i+1}, \cdots , U_N,\theta)|\theta]$$
over $\gamma_i \in \Gamma_i$ where $U_0 = \gamma_0(Y_0, U_1,\cdots, U_{i-1},\gamma_i(Y_i),U_{i+1}, \cdots , U_N)$. This is of the form considered in Proposition~\ref{prop1} where $X=Y_i$, $d =\gamma_i(X) = \gamma_i(Y_i)$, $Z$ is the random vector given by $Z = (Y_0, U_1,\cdots, U_{i-1}, U_{i+1},\cdots, U_N)$ and
$F(d, Z, \theta) = C(U_0, U_1, \cdots , U_{i-1},\gamma_i(Y_i), U_{i+1}, \cdots , U_N, \theta)$. The result follows from Proposition~\ref{prop1}. 
\end{proof}

Simultaneously solving $N$ nonlinear equations, i.e., \eqref{prop2_opt} for $i=1,\cdots, N$, is prohibitively challenging. Thus, typically person-by-person optimization (PBPO) approach is used where each decision rule is optimized while decision rules at all other sensors remain fixed. Convergence, at least to a local optimal point, is guaranteed for this greedy approach. For the remainder of the paper, we consider the design of optimal quantizers for a specific cost function namely the Mean-Square Error (MSE). In other words, 
$$C(\hat{\theta}, U_1,\cdots, U_N, \theta) = E[(\hat{\theta} - \theta)^2],$$
where $\hat{\theta} =\gamma_0(Y_0, \gamma_1(Y_1),\cdots, \gamma_N(Y_N))$ and $\theta$ is the true parameter value.

\section{Quantizers for conditionally unbiased and efficient estimators}
\label{sec:unbiased}
In this section, we find the optimal quantizers in distributed estimation for estimators which are efficient and conditionally unbiased. By conditionally unbiased, we mean $E_{x|\theta}[\hat{\theta}] = \theta$ for all $\theta$. The motivation behind such an analysis is that most of the widely used estimators, among them maximum likelihood estimator and maximum a posteriori estimator, are asymptotically unbiased and efficient. In such a scenario, the cost function (MSE) becomes the variance of the estimator which attains the Posterior Cram\'{e}r-Rao Lower Bound (PCRLB). Therefore, the optimization problem can now be formulated as the minimization of PCRLB, or equivalently, the maximization of posterior Fisher Information. Since $\gamma_0(\cdot)$ is assumed to be a fixed efficient, conditionally unbiased estimator, the optimization is now performed over $\gamma = (\gamma_1, \gamma_2, \cdots , \gamma_N)$. While our results hold for any estimator that achieves the PCRLB, the design methodology also applies to cases where no efficient estimator exists; the optimization is therein on performance bounds that are not necessarily attainable but serve as surrogates for estimator performance.
\begin{proposition}
\label{prop3} 
Let $\Gamma$ denote the set of all possible strategies for the distributed estimation problem with identical and conditionally independent distributed sensor observations and $\Gamma^I$ denote the set of all strategies in which all peripheral sensors with the same number of decision regions use identical quantizers. If an efficient and unbiased estimator exists at the Fusion Center, there is no loss in estimation performance in terms of MSE by restricting the search space of optimal strategy to $\Gamma^I$. In other words, if an efficient and conditionally unbiased estimator exists at the FC, there exists an optimal strategy wherein all the peripheral sensors with the same bit-constraint use identical quantization rules.
\end{proposition}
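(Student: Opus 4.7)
The plan is to recast the MSE minimization as the maximization of a scalar posterior Fisher information that is additive across sensors, and then use the i.i.d.\ structure of the local observations to show that the resulting per-sensor subproblems are identical for sensors sharing the same decision-region count. Under the hypothesis that $\gamma_0(\cdot)$ is efficient and conditionally unbiased, the Bayesian MSE equals the reciprocal of the (scalar) posterior Fisher information $J_B(\gamma)$, consistent with the paper's reformulation of the objective as the maximization of posterior Fisher information; so it suffices, given any optimizer of $J_B$ over $\Gamma$, to exhibit an optimizer in $\Gamma^I$ attaining the same $J_B$-value.

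First, I would exploit conditional independence of $Y_0,Y_1,\ldots,Y_N$ given $\theta$ to decompose $J_B$ additively. Writing $U_i=\gamma_i(Y_i)$ for $i\geq1$, conditional independence gives $p(\mathbf{u},y_0|\theta)=p(y_0|\theta)\prod_{i=1}^N p(u_i|\theta)$, whence
$$J_B(\gamma) \;=\; J_{\text{prior}} \;+\; \mathcal{I}_0 \;+\; \sum_{i=1}^N \mathcal{I}(\gamma_i),$$
where $J_{\text{prior}}$ and $\mathcal{I}_0$ are independent of the strategy, and $\mathcal{I}(\gamma_i) = -E[\partial_\theta^2\log p(U_i|\theta)]$ depends only on the quantizer $\gamma_i$. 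The essential feature here is that because the $Y_i$ are identically distributed for $i\geq 1$, the \emph{same} functional $\mathcal{I}(\cdot)$ appears for every peripheral sensor; only its argument varies. Moreover, the feasible set $\Gamma_i$, i.e., all maps from the common $\mathcal{Y}$ into $\{1,\ldots,D_i\}$, depends on $i$ only through $D_i$.

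Second, since $J_B(\gamma)$ is a sum whose $i$-th summand depends only on $\gamma_i\in\Gamma_i$, the joint maximization decouples into $N$ independent single-sensor problems $\gamma_i^\star \in \argmax_{\gamma\in\Gamma_i}\mathcal{I}(\gamma)$. I would then partition the peripheral sensors by decision-region count into $L$ groups, labeled $\ell=1,\ldots,L$ with common count $D^{(\ell)}$; every sensor in group $\ell$ faces the same optimization (same $\mathcal{I}(\cdot)$, same constraint ``$\gamma$ has $D^{(\ell)}$ regions''), so there exists a common maximizer $\bar\gamma^{(\ell)}$. Assigning $\bar\gamma^{(\ell)}$ to every sensor in group $\ell$ yields a strategy in $\Gamma^I$ that attains the optimal value of $J_B$, and hence the minimum MSE.

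The one step I would take care with --- and the closest thing to an obstacle --- is nailing down the ``efficient'' hypothesis, since the decoupling argument hinges on MSE being the reciprocal of the scalar $J_B$; the paper's language (``maximization of posterior Fisher information'') warrants this reading. Once that is in place, additivity of Fisher information under conditional independence plus the symmetry of the per-sensor subproblem makes the conclusion essentially immediate, with no nontrivial analytic estimate or combinatorial manipulation required.
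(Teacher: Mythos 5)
Your proposal is correct and follows essentially the same route as the paper: identify MSE with the PCRLB via the efficient, conditionally unbiased estimator, decompose the posterior Fisher information as prior plus FC contribution plus an additive sum of per-sensor terms under conditional independence, decouple into $N$ single-sensor maximizations, and use the identical observation statistics to conclude that sensors with the same number of decision regions share a common optimal quantizer. No substantive difference from the paper's argument.
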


\begin{proof} 
The posterior Fisher Information under the conditional independence assumption is given by
\begin{eqnarray}
F(\gamma)&=& -E_{\theta,\textbf{U},Y_0}[\nabla_\theta\nabla_\theta^T \ln p(\textbf{U},Y_0,\theta)]\\
&=&-E_{\theta,\textbf{U}}[\nabla_\theta\nabla_\theta^T\ln p(\textbf{U}|\theta)] - E_{\theta,Y_0}[\nabla_\theta\nabla_\theta^T\ln p(Y_0|\theta)]-E_\theta[\nabla_\theta\nabla_\theta^T\ln p(\theta)]\\
&=&F_D+F_0+F_P, \label{FI}
\end{eqnarray}
where $F_D$, $F_0$ and $F_P$ represent the local sensor data's contribution, FC data's contribution and
prior's contribution to $F$ respectively.

Since the prior's contribution to $F$ given by $F_P$ and FC's contribution given by $F_0$ are independent of $\gamma$, the optimization problem can be re-stated as
\begin{align}
\label{re-opt}
\gamma^{opt}=\argmax_{\gamma \in \Gamma} F_D=\argmin_{\gamma \in \Gamma} E_{\theta,\textbf{U}}\left[\frac{\partial^2\ln p(\textbf{U}|\theta)}{\partial\theta^2}\right].
\end{align}

As the sensor observations $(Y_1,\cdots, Y_N)$ are conditionally independent and the quantizers $\gamma_i$ are independent of each other, the quantizer outputs are also conditionally independent, i.e. $\ln p(\textbf{U}|\theta) = \sum_{i=1}^N \ln p(U_i|\theta)$.
The objective function now becomes
\begin{equation}
E_{\theta,\textbf{U}}\left[\frac{\partial^2\ln p(\textbf{U}|\theta)}{\partial\theta^2}\right]=\sum_{i=1}^N E_{\theta,U_i}\left[\frac{\partial^2\ln p(U_i|\theta)}{\partial\theta^2}\right].
\end{equation}

The solution to this problem is
\begin{equation}
\label{coupled}
\gamma^{opt}=\argmin_{\gamma \in \Gamma} \sum_{i=1}^N E_{\theta,U_i}\left[\frac{\partial^2\ln p(U_i|\theta)}{\partial\theta^2}\right].
\end{equation}

Observe that \eqref{coupled} can be decoupled into $N$ optimization problems given by
\begin{equation}
\label{decoupled}
\gamma_i^{opt}=\argmin_{\gamma_i \in \Gamma_i} E_{\theta,U_i}\left[\frac{\partial^2\ln p(U_i|\theta)}{\partial\theta^2}\right] \qquad \text{for $i=1,\cdots,N$}.
\end{equation}

Note that, when all the peripheral sensors have identical statistics, the above $N$ optimization problems can be split into $L$ groups, each consisting of identical optimization problems, where $L$ is the number of distinct values of $D_i$. Each of these optimization problems within a group give identical solutions. Therefore, for an efficient and conditionally unbiased estimator, no loss is incurred when only $L$ different quantizers are used at the local sensors. 
\end{proof}

Proposition \ref{prop3} states that, for the special case when all the sensors send the same number of bits, we can constrain all peripheral sensors to use the same quantization rule, without increasing the MSE of the efficient, conditionally unbiased estimator. Furthermore, this optimal quantizer can be found by solving the optimization problem in \eqref{decoupled}. Note that this result holds for any network size $N$. In the next section, we put an additional constraint on the total number of bits that the sensors can transmit to the FC at a given time instant. Under such constraint, we answer the question of whether it is better to have more sensors sending less number of bits/sensor or fewer sensors sending a higher number of bits/sensor? 

\section{Quantizer design under rate constraints}
\label{sec:rate}
In the previous sections, we have found that identical quantizers are optimal when all sensors have the same number of decision regions. The next question to be addressed is the form of the quantizer, or in other words, if each sensor uses a $D_i$ level quantizer, what is the optimal value of $D_i$? The above question has been answered for a detection problem by Chamberland and Veeravalli in \cite{Chamberland&Veeravalli:SP03}. This problem can be solved under a rate constraint on the total number of bits that can be transmitted via the multiple access channel available to the sensors. Consider the scenario where the sensor network is limited by the capacity of the wireless channel over which the local sensors are transmitting their data. Although we consider the presence of a wireless channel between local sensors and fusion center, we consider the simplified case of ideal channels here since the goal of this work is to understand the optimal bit allocation for the sensors\footnote{For a discussion on distributed inference under non-ideal channels, the interested reader is referred to \cite{varshney_spmag06}.}. 

Let each sensor $S_i$ quantize its local observation $Y_i$ using quantizer $\gamma_i(\cdot)$ and transmit the quantized value $u_i=\gamma_i (y_i) \in 1,\dots, D_i$ to the fusion center. When the channel is only able to carry $R$ bits of information per unit time, quantizer design should be considered under the following bit-rate constraint
\begin{equation}
\sum_{i=1}^N \lceil \log_2 {D_i} \rceil \leq R.
\end{equation}
We write $\Gamma(R)$ to denote the set of all admissible  strategies corresponding to a channel with capacity $R$.

Under the conditions discussed in Sec. \ref{sec:unbiased}, when we use conditionally unbiased and efficient estimators, the PCRLB or posterior FI can be considered as the performance metric. Note that, as the sensor observations $Y_i, \dots, Y_N$ are conditionally independent and the quantizers $\gamma_i$ are independent of each other, $F_D$ can be further decomposed as
\begin{eqnarray}
F_D(\gamma) = -\sum_{i=1}^N E_{\theta,U_i}\left[\frac{\partial^2 \ln p(U_i|\theta)}{\partial \theta^2}\right] =\sum_{i=1}^N F_i(\gamma_i),
\end{eqnarray}
where $F_i(\gamma_i)$ is the contribution of sensor $S_i$ to the posterior FI.

For quantization strategy $\gamma_i$, the contribution $F_i(\gamma_i)$ of sensor $S_i$ to $ F(\gamma)$ is bounded above by the Fisher Information $I^*$ contained in one observation $Y$, i.e.,
\begin{equation}
F_i(\gamma_i)  \leq  I^*,
\end{equation}
where $I^*=-E_{\theta,y}\left[\frac{\partial^2 \ln p(y|\theta)}{\partial \theta^2}\right]$. This is because when the observation $y$ is quantized, there is a potential loss in information. By the data processing inequality for Fisher Information \cite{Zamir98}, the amount of Fisher information contained in the observations cannot increase due to quantization. Therefore, the Fisher information corresponding to the quantized observation is no greater than the Fisher information in the actual observation.

We now state the conditions under which binary quantizers ($D_i=2$, for all $i$) are optimal.

\begin{proposition} 
\label{prop:cond}
Suppose there exists a binary quantization function $\hat{\gamma_b} \in \Gamma_b$ ($\Gamma_b$ denotes the set of binary functions on the observation space) such that
\begin{equation}
\label{eq:cond_binary}
F_i(\hat{\gamma_b}) \geq \frac{I^*}{2},
\end{equation}
then having $R$ identical sensors, each sending one bit of information is optimal.
\end{proposition}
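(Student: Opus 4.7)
The plan is to show that, under the hypothesis $F_i(\hat{\gamma}_b)\ge I^*/2$, any configuration that uses a sensor with more than one bit can be replaced by an all-binary configuration consuming the same number of total bits without decreasing the posterior Fisher information, so the optimum must lie among schemes that use $R$ identical binary sensors.

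First I would fix an arbitrary admissible strategy $\gamma\in\Gamma(R)$, say with $N$ sensors and bit allocation $b_i=\lceil\log_2 D_i\rceil$ satisfying $\sum_{i=1}^N b_i\le R$. The total contribution to the posterior FI decomposes (by conditional independence, exactly as in Section~\ref{sec:rate}) as $F_D(\gamma)=\sum_{i=1}^N F_i(\gamma_i)$, and the data processing inequality for Fisher information already cited in the paper gives $F_i(\gamma_i)\le I^*$ for every $i$. This is the only upper bound I will need.

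Next comes the key comparison. For any sensor $S_i$ spending $b_i\ge 2$ bits, I would replace it with $b_i$ fresh peripheral sensors, each transmitting one bit produced by the hypothesized binary quantizer $\hat{\gamma}_b$; the combined bit usage is unchanged. Because the observations across sensors are conditionally i.i.d.\ and the new binary sensors are conditionally independent of the rest of the network, their FI contributions add, giving a combined contribution of at least $b_i\cdot F_i(\hat{\gamma}_b)\ge b_i\, I^*/2\ge I^*\ge F_i(\gamma_i)$, where the last inequality uses the data processing bound. Repeating this substitution for every sensor with $b_i\ge 2$ yields a strategy in $\Gamma(R)$ consisting entirely of binary sensors whose total FI is no smaller than $F_D(\gamma)$. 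Since $F_0$ and $F_P$ in \eqref{FI} do not depend on $\gamma$, the same inequality holds for the full posterior FI, hence also for the PCRLB.

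Finally, any remaining slack in the rate constraint can be used by adding extra binary sensors operating $\hat{\gamma}_b$; each such addition contributes an extra $I^*/2\ge 0$ to $F_D$, so the optimum exhausts the budget and uses exactly $R$ binary sensors. Proposition~\ref{prop3} then tells me that among these $R$ binary peripheral sensors, an identical quantization rule is optimal, completing the argument. The main obstacle I anticipate is making the ``split one $b$-bit sensor into $b$ binary sensors'' step fully rigorous within the formal setup of Section~\ref{sec:prob}, because that model treats $N$ as given; I would handle this by observing that $\Gamma(R)$ is the union over $N$ of the strategy spaces with $N$ peripheral sensors subject to $\sum_i b_i\le R$, so varying $N$ is already part of the optimization once a rate constraint is imposed rather than a sensor-count constraint.
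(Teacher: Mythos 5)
Your proposal is correct and takes essentially the same route as the paper's proof: an exchange argument that replaces every sensor using two or more bits by binary sensors running $\hat{\gamma_b}$, compares contributions via the data-processing bound $F_i(\gamma_i)\le I^*$ together with the hypothesis $F_i(\hat{\gamma_b})\ge I^*/2$, and then fills the remaining rate budget with additional binary sensors. The only cosmetic differences are that the paper replaces each multi-bit sensor by exactly two copies of $\hat{\gamma_b}$ and picks $\hat{\gamma_b}$ to also dominate the binary rules already present, whereas you use $b_i$ copies and defer the identical-quantizer conclusion to Proposition~\ref{prop3}; both are valid.
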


\begin{proof}
Let rate $R$ and strategy $\gamma=(\gamma_1, \gamma_2, \dots, \gamma_N) \in \Gamma(R)$ be given. To prove the claim, we construct an admissible binary strategy $\gamma' \in \Gamma(R)$ such that $F(\gamma') \geq F(\gamma)$. First, we divide the collection of decision rules $\gamma=\{\gamma_1, \gamma_2, \dots, \gamma_N\}$ into two sets, the first set contains all the binary functions and the other composed of the remaining quantization rules. We define $S_b$ to be the set of integers for which the function $\gamma_i$ is a binary decision rule
\begin{equation}
S_b=\{i: 1 \leq i \leq N, \gamma_i \in \Gamma_b\}.
\end{equation}
Similarly, we let $S_{b}^c=\{1, 2, \dots, N\}$-$S_b$. We choose a binary decision rule $\hat{\gamma_b} \in \Gamma_b$ such that
\begin{equation}
F_i(\hat{\gamma_b}) \geq \max \{ \max_{i \in S_b} \{F_i(\gamma_i) \}, \frac{I^*}{2} \}.
\end{equation}
Such a function $\hat{\gamma_b}$ always exists by assumption $F_i(\hat{\gamma_b}) \geq \frac{I^*}{2}$. Notice that when $i \in S_{b}^c$, $D_i >2$ which implies that $\lceil \log_2 (D_i) \rceil  \geq 2$. We can replace each sensor with index in $S_{b}^c$ by two binary sensors without exceeding the capacity of the channel. Then we consider the alternative scheme $\gamma'$ in which we replace every sensor with index in $S_b$ by a binary sensor with decision rule $\hat {\gamma_b}$.
\begin{eqnarray}
F(\gamma')&=&(|S_b|+2|S_{b}^c|)F_i(\hat{\gamma_b}) \geq |S_b|F_i(\hat{\gamma_b})+|S_{b}^c|I^* \nonumber\\
&\geq& F(\gamma).
\end{eqnarray}
For a fixed decision rule $\hat {\gamma_b}$, the Fisher information at the fusion center is monotonically increasing in the number of sensors. We can, therefore, improve performance by increasing the number of sensors in $\gamma'$ until the rate constraint $R$ is satisfied with equality. The strategy $\gamma$ being arbitrary, we conclude that having $N=R$ identical sensors, each sending one bit of information, is optimal.
\end{proof}

Combining with the result from Sec. \ref{sec:unbiased}, it is optimal for local sensors to use identical binary quantization rules under certain conditions. We now present an example of the widely used Gaussian observations model to show that the above condition can be achieved for this case.

\subsection*{Gaussian Observations}
\label{sec:Gaussian}
Consider a sensor network consisting of sensors which are estimating a parameter $\theta \sim p_{\Theta}(\cdot)$ with mean $\mu_\theta$ and variance $\sigma^2_{\theta}$. Each sensor $S_i$ receives noisy observations $y_i$ which are governed by Gaussian statistics:

\begin{equation}
\label{eq:Gaussian}
p(y_i|\theta)\sim\mathcal{N}(\theta,\sigma^2),
\end{equation}

where $\mathcal{N}(\theta,\sigma^2)$ denotes a Gaussian distribution with mean $\theta$ and variance $\sigma^2$. In order to check the condition \eqref{eq:cond_binary}, we first evaluate the contribution of a single sensor to the total posterior Fisher information in the Gaussian case.

\begin{lemma}
For observations with Gaussian distributions as in \eqref{eq:Gaussian}, the contribution of a single sensor to the posterior Fisher Information is given by $I^*=\frac{1}{\sigma^2}$.
\end{lemma}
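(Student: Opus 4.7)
The plan is to compute the quantity $I^{*} = -E_{\theta,y}\!\left[\partial^{2} \ln p(y|\theta)/\partial\theta^{2}\right]$ directly from the Gaussian likelihood, since for this noise model the integrand collapses to a deterministic constant and no delicate interchange-of-limits argument is needed. First I would write out the log-likelihood as
$$
\ln p(y|\theta) \;=\; -\tfrac{1}{2}\ln(2\pi\sigma^{2}) \;-\; \frac{(y-\theta)^{2}}{2\sigma^{2}},
$$
differentiate once with respect to $\theta$ to obtain the score $(y-\theta)/\sigma^{2}$, and differentiate a second time to get $-1/\sigma^{2}$.

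Because this second derivative is free of both $y$ and $\theta$, the double expectation over the joint density $p(\theta)p(y|\theta)$ is trivial, and yields $I^{*} = 1/\sigma^{2}$ regardless of the form of the prior $p_{\Theta}(\cdot)$. I would briefly emphasize this point to make clear why $\mu_{\theta}$ and $\sigma_{\theta}^{2}$ do not appear in the final expression: the Fisher information for a Gaussian location family is a property of the likelihood alone, not of the prior. I would also note that the quantity $I^{*}$ as used in Section~\ref{sec:rate} is precisely the single-observation contribution to $F_{D}$ under the conditional-independence assumption, so the lemma merely specializes that general definition to the Gaussian case.

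There is no real obstacle here; the lemma is a direct calculation and is stated chiefly so that it can be plugged into the binary-quantizer optimality condition \eqref{eq:cond_binary} in the subsequent analysis of the Gaussian low-SNR regime. The only subtlety worth flagging is that one should use the posterior Fisher information definition consistent with equation~\eqref{FI}, dropping the prior's $F_{P}$ contribution when isolating the per-sensor contribution, which matches the way $I^{*}$ was introduced just above the lemma.
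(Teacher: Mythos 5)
Your proposal is correct and follows essentially the same route as the paper: differentiate the Gaussian log-likelihood twice in $\theta$, observe that the second derivative is the constant $-1/\sigma^2$, and conclude $I^* = 1/\sigma^2$ after the (trivial) expectation. The extra remarks about the prior playing no role are fine but not needed beyond what the paper's one-line calculation already establishes.
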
 

\begin{proof}
The lemma can be proved by straightforward calculation. Since $p(y|\theta)\sim\mathcal{N}(\theta,\sigma^2)$, we have

\begin{eqnarray}
&&\frac{\partial\ln p(y|\theta)}{\partial \theta}=\frac{(y-\theta)}{\sigma^2}\\
&\implies& \frac{\partial^2 \ln p(y|\theta)}{\partial \theta^2}=-\frac{1}{\sigma^2}
\end{eqnarray}
which gives us the desired result.
\end{proof}

Note that Proposition \ref{prop:cond} states that binary quantizers are optimal if there exists a binary quantizer $\hat{\gamma}_b$ which satisfies the condition \eqref{eq:cond_binary}. Consider the threshold quantizer using threshold $\mu_{\theta}$ as a candidate binary quantizer:

\begin{equation}
\label{threshold_quant_bin}
\hat{\gamma_b}(y)=
\begin{cases}
1, \qquad \text{if $y \geq \mu_\theta$}\\
0, \qquad \text{otherwise}
\end{cases}.
\end{equation}

We find the Fisher information $F_i(\hat{\gamma}_b)$ corresponding to this binary threshold quantizer $\hat{\gamma_b}$.

\begin{lemma}
For sensor $S_i$ using the binary threshold quantizer $\hat{\gamma_b}$, the posterior Fisher information is given by the following:
\begin{equation}
F_i(\hat{\gamma}_b)= \frac{1}{2\pi\sigma^2}E_{\theta}\left[\frac{\exp{\left(-\frac{(\theta-\mu_\theta)^2}{\sigma^2}\right)}}{(1-Q(\frac{\theta-\mu_\theta}{\sigma}))Q(\frac{\theta-\mu_\theta}{\sigma})}\right],
\end{equation}
where 
\begin{equation}
Q(x)=\frac{1}{\sqrt{2\pi}}\int_{x}^\infty\exp{\left(-\frac{t^2}{2}\right)}dt
\end{equation}
is the complementary cumulative distribution function of Gaussian distribution.
\end{lemma}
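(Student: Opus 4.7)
The plan is to compute the Fisher information directly from its definition $F_i(\hat\gamma_b) = -E_{\theta,U_i}\!\left[\frac{\partial^2 \ln p(U_i|\theta)}{\partial\theta^2}\right]$, exploiting the fact that under the threshold rule $\hat\gamma_b$, the quantizer output $U_i$ is simply a Bernoulli random variable whose success probability depends on $\theta$ through the Gaussian tail.

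First I would identify the conditional pmf. Since $Y_i|\theta \sim \mathcal N(\theta,\sigma^2)$, the event $\{Y_i \geq \mu_\theta\}$ has probability $Q\!\left(\tfrac{\mu_\theta-\theta}{\sigma}\right) = 1 - Q\!\left(\tfrac{\theta-\mu_\theta}{\sigma}\right)$ using the symmetry $Q(-x) = 1 - Q(x)$. Setting $p(\theta) \defeq 1 - Q\!\left(\tfrac{\theta-\mu_\theta}{\sigma}\right)$, we have $P(U_i=1|\theta) = p(\theta)$ and $P(U_i=0|\theta) = 1 - p(\theta) = Q\!\left(\tfrac{\theta-\mu_\theta}{\sigma}\right)$.

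Next I would apply the standard Bernoulli-Fisher identity. For a Bernoulli$(p(\theta))$ family, a short computation (differentiating $\ln p(U_i|\theta) = U_i \ln p(\theta) + (1-U_i)\ln(1-p(\theta))$ twice and taking the conditional expectation) gives
\begin{equation}
-E_{U_i|\theta}\!\left[\frac{\partial^2 \ln p(U_i|\theta)}{\partial\theta^2}\right] = \frac{(p'(\theta))^2}{p(\theta)(1-p(\theta))}.
\end{equation}
Using $\frac{d}{dx}Q(x) = -\frac{1}{\sqrt{2\pi}}e^{-x^2/2}$ together with the chain rule,
\begin{equation}
p'(\theta) = \frac{1}{\sigma\sqrt{2\pi}}\exp\!\left(-\frac{(\theta-\mu_\theta)^2}{2\sigma^2}\right),
\end{equation}
so $(p'(\theta))^2 = \frac{1}{2\pi\sigma^2}\exp\!\left(-\frac{(\theta-\mu_\theta)^2}{\sigma^2}\right)$.

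Finally I would assemble the pieces by taking the outer expectation over $\theta$, which yields exactly the claimed formula. There is no real obstacle here; the only care needed is in (i) choosing a consistent sign convention for $Q$ so that the expression $(1-Q(\cdot))Q(\cdot)$ comes out symmetric and matches the stated form, and (ii) verifying that the regularity conditions justifying the $(p'(\theta))^2/(p(1-p))$ formula hold, which they do since $p(\theta) \in (0,1)$ for all finite $\theta$ and is smooth in $\theta$.
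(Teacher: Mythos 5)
Your proof is correct and follows essentially the same route as the paper: apply the binary-quantizer Fisher information formula $(g'(\theta))^2/\bigl(g(\theta)(1-g(\theta))\bigr)$ with $g(\theta)=P(U_i=1|\theta)$ given by the Gaussian tail, compute $g'(\theta)$ as the Gaussian density, and take the expectation over $\theta$. The only differences are cosmetic: you derive the Bernoulli identity that the paper simply cites, and you handle the $Q$ sign convention explicitly (the paper writes $g(\theta)=Q\bigl(\tfrac{\theta-\mu_\theta}{\sigma}\bigr)$, which is really $P(U_i=0|\theta)$, but the formula is symmetric in $g$ and $1-g$ so the result is unaffected).
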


\begin{proof}
The proof follows from the fact that for binary quantizer, the FI is given by \cite{chen_tsp10} \begin{equation}
\label{FI_one}
I(\theta)=\frac{(g'(\theta))^2}{g(\theta)(1-g(\theta))},
\end{equation}
where $g'(\theta)$ represents the first derivative of $g(\theta)=P(U_i=1|\theta)$ with respect to $\theta$. For Gaussian observations, we have $g(\theta)=Q(\frac{\theta-\mu_\theta}{\sigma})$ and using the definition of posterior Fisher information gives the desired result.
\end{proof}

Using the above lemmas, we can find the sufficient condition for binary quantizers to be optimal. Note that this is only a sufficient condition and is not necessary for the optimality of binary quantizers.

\begin{theorem}
For Gaussian observations under low signal-to-noise ratio (SNR) regime $\left(\frac{\sigma^2_{\theta}}{\sigma^2} \leq 2\ln\frac{4}{\pi}\right)$, it is optimal to have identical quantizers at all $R$ sensors, each sending one-bit of information. 
\end{theorem}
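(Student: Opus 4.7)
My plan is to invoke Proposition~\ref{prop:cond} together with Proposition~\ref{prop3}. Proposition~\ref{prop:cond} tells us that once we exhibit a single binary quantizer $\hat{\gamma}_b$ with $F_i(\hat{\gamma}_b) \geq I^{*}/2$, having $R$ one-bit sensors is optimal, and Proposition~\ref{prop3} then lets us take those sensors to be identical. For the Gaussian model we already have $I^{*} = 1/\sigma^{2}$, so the theorem reduces to producing a binary $\hat{\gamma}_b$ satisfying $F_i(\hat{\gamma}_b) \geq 1/(2\sigma^{2})$. The natural candidate is the midpoint-threshold quantizer of \eqref{threshold_quant_bin}, whose posterior Fisher information contribution is already given in closed form by the preceding lemma.

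Starting from that closed form, the plan is to lower-bound the integrand pointwise and then apply Jensen's inequality to the remaining expectation. The key pointwise ingredient is the tail estimate
\begin{equation*}
Q(x)\bigl(1-Q(x)\bigr) \;\leq\; \tfrac{1}{4}\exp\!\bigl(-x^{2}/2\bigr), \qquad x \in \mathbb{R},
\end{equation*}
which sharpens the textbook bound $Q(1-Q)\leq 1/4$ and can be read off from P\'olya's inequality $\Phi(x) \geq \tfrac{1}{2}\bigl[1 + \sqrt{1-e^{-2x^{2}/\pi}}\bigr]$ for the standard normal cdf $\Phi$ (using $2/\pi > 1/2$), together with the elementary identity $Q(x)(1-Q(x)) = 1/4 - (\Phi(x)-1/2)^{2}$. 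Substituting this bound into the denominator of the lemma's integrand cancels half of the quadratic exponent and yields
\begin{equation*}
F_i(\hat{\gamma}_b) \;\geq\; \frac{2}{\pi\sigma^{2}}\, E_\theta\!\left[\exp\!\left(-\frac{(\theta-\mu_\theta)^{2}}{2\sigma^{2}}\right)\right].
\end{equation*}

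The convexity of $z \mapsto e^{-z}$ then gives, via Jensen's inequality, $E_\theta[\exp(-(\theta-\mu_\theta)^{2}/(2\sigma^{2}))] \geq \exp(-\sigma_\theta^{2}/(2\sigma^{2}))$, whereupon the required inequality $F_i(\hat{\gamma}_b) \geq 1/(2\sigma^{2})$ is algebraically equivalent to $4/\pi \geq \exp(\sigma_\theta^{2}/(2\sigma^{2}))$, i.e., to the hypothesis $\sigma_\theta^2/\sigma^2 \leq 2\ln(4/\pi)$. Proposition~\ref{prop:cond} then delivers optimality of one-bit quantization and Proposition~\ref{prop3} upgrades the conclusion to identical quantizers at all $R$ sensors, closing the argument. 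The main obstacle I foresee is the exponential-tail refinement above: the naive $Q(1-Q) \leq 1/4$ alone yields only the weaker threshold $\sigma_\theta^{2}/\sigma^{2} \leq \ln(4/\pi)$, so capturing at least part of the Gaussian decay of $Q(1-Q)$ in the tails is essential to recover the claimed factor of two.
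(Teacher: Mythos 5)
Your proposal is correct and follows essentially the same route as the paper: plug the bound $Q(x)Q(-x)\leq\tfrac{1}{4}e^{-x^2/2}$ into the lemma's closed-form expression for $F_i(\hat{\gamma}_b)$, apply Jensen's inequality to the convex exponential to get $F_i(\hat{\gamma}_b)\geq\tfrac{2}{\pi\sigma^2}\exp(-\sigma_\theta^2/(2\sigma^2))$, and conclude $F_i(\hat{\gamma}_b)\geq\tfrac{1}{2\sigma^2}=I^*/2$ under the stated low-SNR condition, invoking Proposition~\ref{prop:cond} (and Proposition~\ref{prop3} for identicality). The only difference is cosmetic: the paper simply cites this tail bound from the Chamberland--Veeravalli reference, whereas you sketch a self-contained derivation via P\'olya's inequality, which is fine.
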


\begin{proof}
A sufficient condition for binary quantizers to be optimal is \eqref{eq:cond_binary} from Proposition \ref{prop:cond}. We determine the condition under which \eqref{eq:cond_binary} is satisfied. We start by using the following result from \cite{Chamberland&Veeravalli:SP03}: For any $x$,
\begin{equation}
Q(x)Q(-x)\leq\frac{1}{4}e^{(-x^2/2)}.
\end{equation}

From this we have the following set of inequalities
\begin{eqnarray}
F_i(\hat{\gamma}_b)&=&\frac{1}{2\pi\sigma^2}E_{\theta}\left[\frac{\exp{\left(-\frac{(\theta-\mu_\theta)^2}{\sigma^2}\right)}}{(1-Q(\frac{\theta-\mu_\theta}{\sigma}))Q(\frac{\theta-\mu_\theta}{\sigma})}\right]\\
&\geq&\frac{1}{2\pi\sigma^2}E_{\theta}\left[\frac{4\exp{\left(-\frac{(\theta-\mu_\theta)^2}{\sigma^2}\right)}}{\exp{\left(-\frac{(\theta-\mu_\theta)^2}{2\sigma^2}\right)}}\right]\\
&=&\frac{2}{\pi\sigma^2}E_{\theta}\left[\exp{\left(-\frac{(\theta-\mu_\theta)^2}{2\sigma^2}\right)}\right]\\
&\geq&\frac{2}{\pi\sigma^2}\exp{\left(-E_{\theta}\left[\frac{(\theta-\mu_\theta)^2}{2\sigma^2}\right]\right)}\label{eq:jensen}\\
&=&\frac{2}{\pi\sigma^2}\exp{\left(-\frac{\sigma_\theta^2}{2\sigma^2}\right)}\\
&\geq&\frac{2}{\pi\sigma^2}\exp{\left(-\frac{2\ln\frac{4}{\pi}}{2}\right)}=\frac{1}{2\sigma^2}\label{eq:cond_use}
\end{eqnarray}
where, for \eqref{eq:jensen}, we have used Jensen's inequality \cite{cover_inftheory} for the convex exponential function and for \eqref{eq:cond_use}, the condition of low SNR regime $\left(\frac{\sigma^2_{\theta}}{\sigma^2} \leq 2\ln\frac{4}{\pi}\right)$ is used.
\end{proof}

The above theorem states that when the local sensor observations have very low SNR, it is optimal to use identical binary quantizers at local sensors. This is intuitively true because when the SNR is low, the observations do not have a lot of information and, therefore, the sensors do not have to waste their resources and send fine-quantized data. However, this result does not imply that binary quantizers are always optimal. For example, when the observations are correlated, binary quantizers need not be optimal. To illustrate how correlation affects our results, we study the specific case of estimation of mean in equicorrelated Gaussian noise. In this case, the observations have the following distribution 

$$p(\mathbf{y}|\theta)\sim\mathcal{N}(\theta\mathbf{1},\mathbf\Sigma),$$
where $\mathbf 1$ is the $N\times 1$ column vector of all ones and $\Sigma$ is the covariance matrix where the diagonal elements are $\sigma^2$ and the off-diagonal elements are $\rho\sigma^2$. Here $\rho$ is the correlation coefficient. The Fisher information obtained from $N$ observations is given by 
$2\mathbf{1}^T\Sigma^{-1}\mathbf{1}$. Fig.~\ref{fig:corr} shows the amount of information contained in observations for unit noise variance, $\sigma^2=1$. As the correlation coefficient goes to one, the amount of information contained in $N$ observations approaches the amount of information contained in one observation. Hence, in the limit, having one sensor sending $R$ bits of information is optimal. This suggests that correlation in the observations favors having fewer sensors sending multiple bits, or having nonidentical sensors, rather than employing a set of identical binary
sensors. Similar observations were also made by Chamberland and Veeravalli in \cite{Chamberland&Veeravalli:SP03} for the case of decentralized detection in sensor networks. 

\begin{figure}[htb]
\centering
\includegraphics[width=3.5in,height=!]{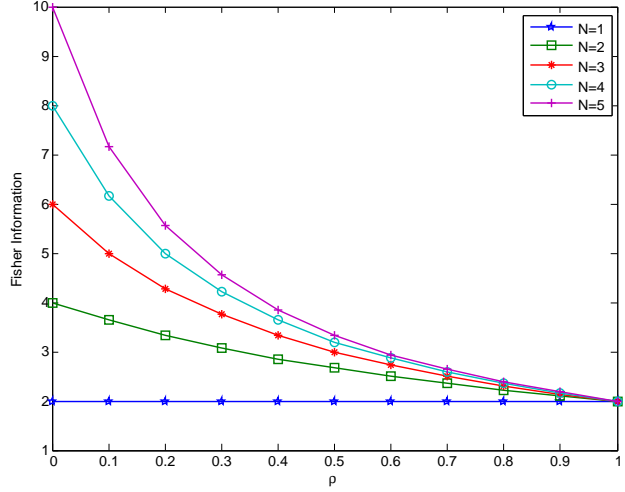}
\caption{Fisher information contained in $N$ sensor observations as a function of correlation coefficient $\rho$.}
\label{fig:corr}
\end{figure}

\section{The Location Parameter Estimation problem}
\label{sec:location}
Having shown that identical binary quantizers are optimal under certain conditions, in this section, we present a methodology to design this optimal identical binary quantizer for a location parameter estimation problem. Consider the location parameter estimation problem where the observations are corrupted by independent and identically distributed (i.i.d) additive noise with pdf $p_W(w)$.
\begin{equation}
\label{AN}
Y_i = \theta +W_i \qquad \text{for $i = 1,\cdots,N$,}
\end{equation}
where $\theta \sim p_\Theta(\theta)$ and $W_i$ is the i.i.d noise. The local sensors process their own observations locally before sending the processed data ($U_i$ for $i = 1,\cdots,N$) to the FC. The FC then estimates $\theta$ from $\textbf{U} = [U_1 \cdots U_N]$ and $Y_0$. We consider the problem of designing the binary quantizers which we have shown to be optimal under certain conditions. Also, as shown in the previous sections, for an efficient and unbiased estimator at the FC, the identical quantizers are optimal. Let the quantizer be represented by $\gamma(Y)$ which maps the data $Y_i$ to one of the two bit values $\{0, 1\}$. We represent the quantizers probabilistically as
\begin{equation}
\gamma(Y_i)=P(U_i=1|Y_i).
\end{equation}

Thus, $\gamma(Y_i)$ denotes the probability with which the $i^{th}$ local sensor sends a `1' to the FC given its observation, $Y_i$. Stochastic quantizers are employed here as they cover a wide range of possible quantizers including both the threshold quantizers and the dithering quantizers.

\subsection{Posterior Cram\'{e}r Rao lower bound}
For the location parameter estimation problem, $F$ from \eqref{FI} is the posterior Fisher Information \cite{vantrees_bounds} which is a function of the prior distribution $p_\Theta(\theta)$, the quantizer $\gamma(Y)$, and the noise pdf $p_W(w)$. It is given as 
\begin{equation}
\label{total_FI}
F(p_\Theta,\gamma, p_W) = F_D + F_0 + F_P,
\end{equation}
where $F_D$, $F_0$ and $F_P$ are as defined before.

As the $N$ observations are conditionally independent, we have 
\begin{equation}
\label{FD}
F_D = NE_\theta[I(\theta)],
\end{equation}
where $I(\theta)$ is given by \eqref{FI_one}.

Let $g(\theta)$ denote the probability that the quantizer output is `1' given the true value of $\theta$
\begin{align}
g(\theta)=P(U_i=1|\theta)&=E_{W_i}[\gamma(\theta +W_i)]\\
&=\int_y \gamma(y)p_W(y-\theta)dy.\label{g}
\end{align}

For a binary quantizer, the FI is given by \eqref{FI_one}. From \eqref{total_FI}, \eqref{FD} and \eqref{FI_one}, the posterior FI is given by
\begin{equation}
\label{FI_total}
F(p_\Theta,\gamma,p_W)=N\int_\theta \frac{(g'(\theta))^2}{g(\theta)(1-g(\theta))}p_\Theta(\theta)d\theta + F_0 +F_P.
\end{equation}

\subsection{Optimal quantizer design}
The optimal quantizer $\gamma^*(y)$ minimizes the PCRLB or, equivalently, maximizes $F(p_\Theta,\gamma, p_W)$. Since $F_0$ and $F_P$ are independent of the quantizer, the optimization
problem can be stated as
\begin{align}
\gamma^*(\cdot)=\argmax_{\gamma(\cdot)}F_D=\argmax_{\gamma(\cdot)}\int_\theta \frac{(g'(\theta))^2}{g(\theta)(1-g(\theta))}p_\Theta(\theta)d\theta.
\label{opt_FD}
\end{align}

This problem can be solved by observing that the objective function $F_D$ depends on $\gamma(\cdot)$ only
through $g(\theta)$ given in \eqref{g} which can be re-written as 
$$g(\theta)=(\gamma(y)*p_W(-y))(\theta),$$ 
where `*' represents the convolution operation. Transforming this into frequency domain using the Fourier Transform, we get 
$$G(f) = H(f)P_W(-f),$$ 
where $G(f)$, $H(f)$ and $P_W(f)$ are the Fourier transforms of $g(\cdot),\gamma(\cdot)$ and $p_W(\cdot)$ respectively. Therefore, given the noise pdf $p_W(\cdot)$, the quantizer $\gamma(\cdot)$ can be found (if it exists) as
\begin{equation}
\label{inv_FT}
\gamma(y)=F^{-1}\left[\frac{G(f)}{P_W(-f)}\right],
\end{equation}
where $F^{-1}$ is the Inverse Fourier transform.

The problem now reduces to that of finding the optimal $g(\theta)$ that maximizes the integrand in \eqref{opt_FD}. Note that this optimal $g^*(\theta)$ is independent of the noise pdf $p_W(w)$. Upon obtaining $g^*(\theta)$, the optimal quantizer for a given noise pdf can then be designed using \eqref{inv_FT}. Therefore, the optimization in \eqref{opt_FD} can be re-stated as
\begin{eqnarray}
g^*(\cdot)&=&\argmax_{g(\cdot)}L(g(\cdot))\\
&=&\argmax_{g(\cdot)}\int_\theta I(\theta)p_\Theta(\theta) d\theta,\label{opt_FD_g}
\end{eqnarray}
where $L(g(\cdot))=\int_\theta I(\theta)p_\Theta(\theta) d\theta$ and $I(\theta)$ is given in \eqref{FI_one}.

\begin{proposition}
\label{prop4}
Given the prior distribution $p_\Theta(\theta)$, the optimal $g^*(\theta)$ can be found by solving the following differential equation\footnote{Note that this gives a stationary point which needs to be verified to be a maximum.}
\begin{equation}
\label{diff_eq}
p_\Theta(\theta)(g'(\theta))^2(1-2g(\theta)) = 2g(\theta)(1-g(\theta))(g''(\theta)p_\Theta(\theta)+g'(\theta)p'_\Theta(\theta)),
\end{equation}
where $'$ and $''$ denote respectively the first and the second derivatives with respect to $\theta$.
\end{proposition}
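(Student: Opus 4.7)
The plan is to recognize the optimization in \eqref{opt_FD_g} as a classical calculus-of-variations problem in the unknown function $g(\theta)$ and derive the stated ODE as the associated Euler--Lagrange condition. The functional to be maximized is
\[
L(g) = \int_\Theta \mathcal{L}(\theta, g, g')\, d\theta, \qquad \mathcal{L}(\theta, g, g') = \frac{(g'(\theta))^2\, p_\Theta(\theta)}{g(\theta)(1-g(\theta))}.
\]
Since the integrand depends on $\theta$, $g$, and $g'$ only (no higher derivatives), any smooth interior extremum must satisfy
\[
\frac{\partial \mathcal{L}}{\partial g} - \frac{d}{d\theta}\frac{\partial \mathcal{L}}{\partial g'} = 0.
\]

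The next step is to compute these two terms explicitly. Differentiating $\mathcal{L}$ with respect to $g$, the factor $1/[g(1-g)]$ has derivative $-(1-2g)/[g(1-g)]^2$, so
\[
\frac{\partial \mathcal{L}}{\partial g} = \frac{(g')^2 p_\Theta(\theta)\,(2g-1)}{[g(1-g)]^2}.
\]
Differentiating with respect to $g'$ gives
\[
\frac{\partial \mathcal{L}}{\partial g'} = \frac{2 g'\, p_\Theta(\theta)}{g(1-g)},
\]
and then taking a total $\theta$-derivative (using the product rule on numerator $g'\,p_\Theta$ and the chain rule on the denominator $g(1-g)$) yields
\[
\frac{d}{d\theta}\frac{\partial \mathcal{L}}{\partial g'} = \frac{2(g'' p_\Theta + g' p_\Theta')}{g(1-g)} - \frac{2(g')^2 p_\Theta(1-2g)}{[g(1-g)]^2}.
\]

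Substituting into the Euler--Lagrange equation and combining the two $(g')^2 p_\Theta$ terms over the common denominator $[g(1-g)]^2$ gives $(2g-1) + 2(1-2g) = 1-2g$ in the numerator, so the condition reduces to
\[
\frac{(g')^2 p_\Theta(1-2g)}{[g(1-g)]^2} = \frac{2(g'' p_\Theta + g' p_\Theta')}{g(1-g)}.
\]
Clearing denominators by multiplying through by $[g(1-g)]^2$ recovers exactly \eqref{diff_eq}. The only mildly subtle point is bookkeeping the signs when combining the two contributions to $\partial \mathcal{L}/\partial g$ and to $\frac{d}{d\theta}(\partial \mathcal{L}/\partial g')$; nothing in the derivation is truly hard, since the Lagrangian is smooth wherever $0<g(\theta)<1$, which is guaranteed by the probabilistic interpretation $g(\theta)=P(U_i=1\mid\theta)$. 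Standard boundary/transversality conditions (e.g.\ that $g$ tends to appropriate limits at the boundary of $\Theta$, or that $p_\Theta$ has compact support) ensure the boundary terms from the integration by parts vanish, so the interior Euler--Lagrange equation is the necessary stationarity condition, as asserted.
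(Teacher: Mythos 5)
Your proposal is correct and follows essentially the same route as the paper: treat \eqref{opt_FD_g} as a variational problem with Lagrangian $K=I(\theta)p_\Theta(\theta)$, apply the Euler--Lagrange equation, and compute $\partial K/\partial g$ and $\frac{d}{d\theta}\,\partial K/\partial g'$, which after clearing the denominator $[g(1-g)]^2$ yields \eqref{diff_eq}. The only difference is that you carry out the final differentiation and simplification explicitly (the paper leaves it implicit), and your algebra checks out.
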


\begin{proof}
Define $K(\theta) = I(\theta)p_\Theta(\theta)$ as the function of $\theta$ which is the integrand in \eqref{opt_FD_g}. The optimization problem presented in \eqref{opt_FD_g} is a typical variational calculus problem and it can be solved using the Euler-Lagrange equation \cite{brunt04} stated below
\begin{equation}
\label{EL}
\frac{\partial K}{\partial g}=\frac{d}{d\theta}\frac{\partial K}{\partial g'}.
\end{equation}
From the expression of $I(\theta)$ given in \eqref{FI_one}, we have
\begin{equation}
\label{LHS}
\frac{\partial K}{\partial g}=-\frac{(g')^2p_\Theta (1-2g)}{(g-g^2)^2}
\end{equation}
and
\begin{equation}
\label{RHS}
\frac{\partial K}{\partial g'}=\frac{2g'p_\Theta}{(g-g^2)}.
\end{equation}

Differentiating \eqref{RHS} with respect to $\theta$ and using \eqref{EL}, we get the desired result. 
\end{proof}

As can be seen from \eqref{diff_eq}, the differential equation can be solved for a given prior $p_\Theta(\theta)$. After finding this optimal $g^*(\theta)$, the optimal quantizer $\gamma^*(x)$ can be found for a given noise pdf $p_W(w)$ using \eqref{inv_FT}.

\subsection{Example: Least favorable prior}
In this section, we consider a special case of $\theta$ following the least favorable prior and find the optimal
$g^*(\theta)$. Note that when we have a least favorable prior, the Bayesian criterion matches with the minimax criterion. Therefore, the optimal quantizer design is now the following:

\begin{equation}
\label{eq:minimax}
g^*(\theta)=\argmax_{g(\cdot)} \min_{\theta} I(\theta)
\end{equation}

\begin{proposition}
\label{prop}
Given that $\theta$ follows least favorable prior with support $[\theta_{min}, \theta_{max}]$, the solution to the optimization problem in \eqref{eq:minimax}, $g^*(\theta)$ is given by
\begin{equation}
\label{gen_sin}
g^*(\theta)=\frac{1}{2}\left[1+\sin{\pi\left(\frac{\theta-\theta_{min}}{\theta_{max}-\theta_{min}}-\frac{1}{2}\right)}\right], \text{ $\theta \in [\theta_{min},\theta_{max}]$.}
\end{equation}
\end{proposition}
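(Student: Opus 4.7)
The key is to exploit the observation stated just before the proposition that, for a least favorable prior, the Bayesian problem collapses to the minimax problem \eqref{eq:minimax}. A standard equalizer argument in minimax theory then says that the optimal $g^*$ is the one which makes the per-$\theta$ Fisher information $I(\theta)$ \emph{constant} on the support, i.e.\ $I(\theta)\equiv c$ on $[\theta_{min},\theta_{max}]$, with $c$ as large as possible. The heuristic justification is that if $I(\theta)$ were not constant, one could locally perturb $g$ to raise the minimum value of $I$ without decreasing it at other $\theta$, contradicting optimality of $g^*$.

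Setting $I(\theta)=c$ in \eqref{FI_one} gives the separable first-order ODE
\begin{equation*}
(g'(\theta))^2 \;=\; c\,g(\theta)\bigl(1-g(\theta)\bigr).
\end{equation*}
I would solve this by the substitution $g(\theta)=\tfrac12\bigl(1+\sin\phi(\theta)\bigr)$, which yields $g(1-g)=\tfrac14\cos^2\phi$ and $(g')^2=\tfrac14\cos^2\phi\,(\phi')^2$, reducing the ODE to $(\phi')^2=c$ wherever $\cos\phi\neq 0$. Hence $\phi(\theta)=\sqrt{c}\,\theta+\phi_0$ is affine in $\theta$, and there remain only two constants to pin down.

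These constants are fixed by endpoint considerations: since $c$ directly controls the achieved Fisher information while the admissible range of $g$ is constrained to $[0,1]$, the optimum is obtained by letting $g$ traverse the entire excursion from $0$ to $1$ across $[\theta_{min},\theta_{max}]$. Any smaller excursion forces a smaller slope $\phi'=\sqrt{c}$ over the same interval length, lowering the constant $c$. One therefore imposes $\phi(\theta_{min})=-\pi/2$ and $\phi(\theta_{max})=+\pi/2$, which solve to $\sqrt{c}=\pi/(\theta_{max}-\theta_{min})$ and $\phi_0$ fixed accordingly; back-substitution then produces the raised-sine formula \eqref{gen_sin}.

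The main obstacle is the rigorous justification of the equalizer property and the full-range endpoint condition, as both are intuitive but need a careful perturbation argument showing (i) any non-equalizer $g$ admits a local modification that strictly increases $\min_\theta I(\theta)$, and (ii) the supremum of $c$ is actually attained by forcing $g$ to span all of $[0,1]$. A secondary sanity check, more mechanical, is to plug the resulting $g^*$ into the Euler–Lagrange equation \eqref{diff_eq} from Proposition~\ref{prop4}: for uniform $p_\Theta$ (the natural least favorable prior on a bounded interval under translation-invariant noise), $p'_\Theta\equiv 0$ and the equation reduces to $(g')^2(1-2g)=2g(1-g)g''$, which is readily verified for $g(\theta)=\tfrac12(1+\sin\phi)$ with $\phi$ affine in $\theta$. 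This confirms that the candidate is a genuine stationary point of the functional.
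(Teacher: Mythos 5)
Your proposal follows essentially the same route as the paper: invoke the equalizer property so that $I(\theta)\equiv c$ on the support, integrate the resulting separable ODE $(g')^2=c\,g(1-g)$, and fix the constants by the full-excursion boundary conditions $g(\theta_{min})=0$, $g(\theta_{max})=1$, yielding \eqref{gen_sin}. The gaps you flag (rigorous justification of the equalizer step and of the boundary conditions) are exactly the points the paper also asserts without proof, so your argument is at least as complete as the published one, and your Euler--Lagrange cross-check is a reasonable extra sanity check.
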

\begin{proof} 
Note that the minimax solution to \eqref{eq:minimax} is the one where the function $I(\theta)$ is a constant.
Therefore, 
\begin{eqnarray}
&&I(\theta) = c^2\\
\implies&&\frac{(\frac{dg}{d\theta})^2}{g(\theta)(1-g(\theta))}=c^2\\
\implies&&\frac{dg}{\sqrt{g(1-g)}}=cd\theta,
\end{eqnarray}
where $c$ is a constant. Without loss of generality, assuming the boundary conditions as $g(\theta_{min})=0$ and $g(\theta_{max})=1$, we obtain the result as $g^*(\theta)$ as

\begin{equation}
g^*(\theta)=\frac{1}{2}\left[1+\sin{\pi\left(\frac{\theta-\theta_{min}}{\theta_{max}-\theta_{min}}-\frac{1}{2}\right)}\right], \text{ $\theta \in [\theta_{min},\theta_{max}]$.}
\end{equation}
\end{proof}

Note that the same result was obtained by Chen and Varshney \cite{chen_tsp10} when directly using the minimax CRLB as the performance metric for a distributed estimation problem with deterministic unknown parameter $\theta$. 

Without loss of generality, let $\theta_{min} = -1$ and $\theta_{max} = 1$. The optimal $g^*(\theta)$ given in \eqref{gen_sin} becomes
\begin{equation}
\label{canonical}
g^*(\theta)=\frac{1}{2}\left[1+\sin{\frac{\pi\theta}{2}}\right], \qquad \text{ for $\theta \in [-1,1]$}
\end{equation}

\subsubsection{Noiseless observations}
\label{lim}
The performance limit of this distributed estimation problem under the least favorable Bayesian criterion can be characterized by observing the performance when the observations are noiseless. When these observations at the local sensors prior to quantization are noiseless, i.e., the observation model is perfect, $p_W(w) = \delta(w)$. The optimal quantizer, for this case, is given by the sine quantizer
\begin{equation}
\label{sine_quant}
\gamma^*(y)=\frac{1}{2}\left[1+\sin{\frac{\pi y}{2}}\right], \qquad \text{ for $y \in [-1,1]$}
\end{equation}

In this case, the Fisher information is $F = \frac{N\pi^2}{4}$ and the CRLB is $\frac{4}{N\pi^2}$, where $N$ is the total number of sensors. This represents the performance limit under the Bayesian criteria for the distributed location parameter estimation problem with least favorable prior.

\subsubsection{Optimality of threshold quantizers}
Threshold quantizers are the most widely used quantizers due to their simplicity \cite{gian_tsp06_1}. A threshold quantizer is given by

\begin{equation}
\label{threshold_quant}
\gamma_T(y)=
\begin{cases}
1, \qquad \text{if $y \geq T$}\\
0, \qquad \text{otherwise}
\end{cases}.
\end{equation}

An interesting question is to find the conditions on the noise pdf $p_W(w)$ for which the threshold quantizers attain the performance limit as described in Sec. \ref{lim} which is the performance when the
observations are noiseless (refer to the discussion after \eqref{sine_quant}). For the optimality condition
to be satisfied, the threshold quantizer and the noise distribution should satisfy the following constraint
\begin{align}
g^*(\theta)&=\int_y\gamma_T(y)p_W(y-\theta)dy\\
&=\int_{y=T}^\infty p_W(y-\theta)dy=1-F_W(T-\theta),
\end{align}
where $F_W(w)$ is the cumulative distribution function of noise and $g^*(\theta)$ is given by \eqref{sine_quant}. Differentiating both sides and using the fact $\frac{dF_W(w)}{dw} = p_W(w)$, we get the sufficient condition for the threshold quantizer $\gamma_T(y)$ to achieve performance limit when the noise pdf is
\begin{align}
\label{thresh_cond}
p_W(w)=
\begin{cases}
\frac{\pi}{4}\cos{\frac{\pi}{2}(w-T)},	\qquad &\text{for $w \in [T-1, T+1]$}\\
0,		\qquad &\text{otherwise}
\end{cases}.
\end{align}
Threshold quantizers can still be optimal for a wide range of noise distributions (as shown in \cite{kar_tsp12} for minimax CRLB criterion) but the performance limit can be reached only for the above noise pdf. We now show via simulations that when the observations are corrupted by the above noise pdf, using threshold quantizers allows us to achieve the performance limit when the estimator is conditionally unbiased and efficient.

\paragraph*{Simulation Results}
For the sake of tractability, we consider the maximum likelihood estimator (MLE) at the FC, which is asymptotically conditionally unbiased and efficient. Therefore, as $N\to\infty$, the MSE of the estimate should attain the performance limit. The MLE of $g(\theta)$ is given by 
\begin{equation}
\hat{g}(\theta)=\frac{\sum_{i=1}^Nu_i}{N}.
\end{equation}

By invariance property of MLE \cite{kay_est}, we get the ML estimate of $\theta$, as

\begin{equation}
\hat{\theta}=\frac{2}{\pi}\sin^{-1}\left(2\frac{\sum_{i=1}^Nu_i}{N}-1\right).
\label{estimate}
\end{equation}

Consider noisy observations of the location parameter corrupted by additive noise with distribution given in \eqref{thresh_cond} with $T=0$. The local sensors quantize their observations using the threshold quantizer with threshold $T=0$. The FC uses the estimator $\hat{\theta}$ of \eqref{estimate} to estimate the unknown parameter $\theta$. In Fig. \ref{fig:performance}, we plot the MSE of $N_{mc}=5000$ Monte-Carlo runs as a function of the number of sensors. As the figure shows, the MSE reaches the performance limit as $N\to\infty$. This is expected since the estimator at the FC, ML estimator, is asymptotically unbiased and efficient. Therefore, threshold quantizer is asymptotically optimal among all quantizers. 

\begin{figure}[htb]
\centering
\includegraphics[width=3.5in,height=!]{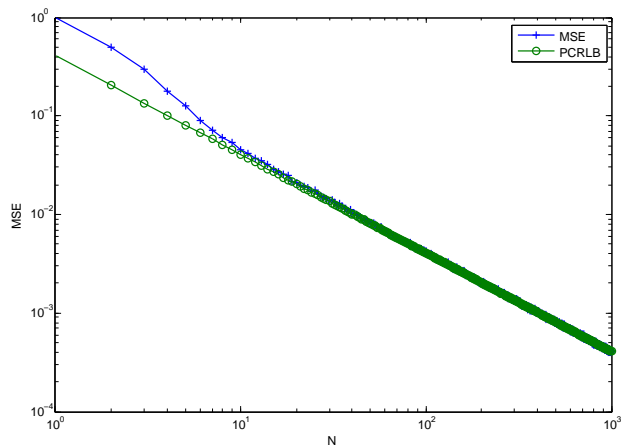}
\caption{MSE of ML estimator approaching the performance limit (PCRLB or minimax CRLB) as $N\to\infty$}
\label{fig:performance}
\end{figure}

\section{Optimality conditions for conditionally dependent observations}
\label{sec:dep}
In this section, we relax the assumption of conditionally independent observations and address the quantizer design problem when the observations are conditionally dependent across sensors. For convenience, we consider the case when the FC does not make any observations of its own and, therefore, the observations are $\textbf{Y}=[Y_1,\cdots, Y_N]$. We derive the optimality conditions by considering the hierarchical conditional independence (HCI) model proposed by Chen~\textit{et.~al.} in \cite{Chen_dep}. This framework introduces a hidden random variable which simplifies the analysis of the system. Consider the distributed estimation system shown in Fig. \ref{fig:model}. When the FC does not make its own observation, the system follows the following Markov Chain:

\begin{equation}
\theta \to \textbf{Y} \to \textbf{U} \to \hat{\theta}.
\end{equation}

Now when the observations are conditionally independent, the conditional distribution factorizes as $p(\textbf{y}|\theta)=\prod_{i=1}^N p(y_i|\theta)$. However, when the observations are not conditionally independent, we cannot factorize the conditional distribution of the observations. Instead, the proposed HCI framework introduces a new hidden random variable $\lambda$ such that the following Markov chain holds:

\begin{equation}
\theta \to \lambda \to \textbf{Y} \to \textbf{U} \to \hat{\theta}
\end{equation}

and the observations are conditionally independent given this hidden random variable $\lambda$. In other words, 

\begin{equation}
p(\textbf{y}|\lambda)=\prod_{i=1}^N p(y_i|\lambda)
\end{equation}
even if $p(\textbf{y}|\theta)\neq\prod_{i=1}^N p(y_i|\theta)$. The equivalence between any general distributed inference model and the HCI model has been discussed in \cite{Chen_dep}. Under this framework, we now derive the optimality conditions of the quantizer for any cost function $C(\hat{\theta},u_1,\cdots,u_N,\theta)$. We first provide a proposition which will be used for deriving the optimality conditions. The results in this section are derived in a manner similar to Sec. \ref{sec:opt_cond_iid}.

Let $\theta$ be a random parameter to be estimated with prior pdf $p(\theta)$ and let $X$ be a random variable, taking values in a set $\mathcal{X}$, with known conditional distribution given $\theta$. Let $D$ be some positive integer, and let $\Delta$ the set of all functions $\delta : \mathcal{X} \to \{1, \cdots ,D\}$. Consistent with our earlier terminology, we shall call such functions quantizers.

\begin{proposition}
\label{prop1_dep}
Let $Z$ be a random variable taking values in a set $\mathcal{Z}$ and assume that, conditioned on $\lambda$, $Z$ is independent of $X$. Let $F : \{1, \cdots ,D\} \times \mathcal{Z} \times \Theta \to \mathcal{R}$ be a given cost function. Let $\delta^*$ be an element of $\Delta$. Then $\delta^*$ minimizes $E[F(\delta(X), Z, \theta)]$ over all $\delta \in \Delta$ if and only if
\begin{eqnarray}
\label{prop1_opt}
\delta^*(X)=\argmin_{d=1,\cdots, D}\int_{\theta,\lambda} a(\theta ,\lambda, d)p(\theta,\lambda|X)d\theta d\lambda \\
\text{with probability 1}\nn
\end{eqnarray}
where 
\begin{equation}
a(\theta, \lambda, d)=E[F(d,Z,\theta)|\theta,\lambda] \qquad \text{$\forall$ $\theta$, $\lambda$, $d$.}
\end{equation} 
\end{proposition}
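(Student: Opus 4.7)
The plan is to adapt the argument from Proposition~\ref{prop1} to accommodate the dependent observation structure introduced via the hidden random variable $\lambda$. As before, the starting observation is that minimizing $E[F(\delta(X), Z, \theta)]$ over all quantizers $\delta$ reduces to a pointwise minimization: for each realization of $X$, choose the value $d \in \{1,\ldots,D\}$ that minimizes $E[F(d, Z, \theta) \mid X]$, with probability $1$. The core task is then to rewrite this conditional expectation in a form that exposes its dependence on the pair $(\theta,\lambda)$ and matches the expression in the proposition statement.

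Applying the tower property while conditioning on both $\theta$ and $\lambda$ before $X$, I would write
$$E[F(d, Z, \theta) \mid X] = E\bigl[E[F(d, Z, \theta) \mid \theta, \lambda, X] \,\big|\, X\bigr].$$
For the inner conditional expectation, I would invoke two facts coming from the HCI Markov chain $\theta \to \lambda \to \mathbf{Y} \to \mathbf{U} \to \hat\theta$ together with the assumed conditional independence of $X$ and $Z$ given $\lambda$. First, conditioning on $X$ adds no information about $Z$ once $\lambda$ is known. Second, because $\lambda$ screens $\theta$ from all downstream quantities in the Markov chain, further conditioning on $\theta$ leaves the conditional law of $Z$ unchanged. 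With $\theta$ treated as a fixed argument of $F$ inside the inner expectation, these two observations together give $E[F(d, Z, \theta) \mid \theta, \lambda, X] = E[F(d, Z, \theta) \mid \theta, \lambda] = a(\theta, \lambda, d)$.

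Expressing the outer expectation as an integral against the joint posterior $p(\theta, \lambda \mid X)$ then yields
$$E[F(d, Z, \theta) \mid X] = \int_{\theta,\lambda} a(\theta, \lambda, d) \, p(\theta, \lambda \mid X) \, d\theta \, d\lambda,$$
which is precisely the quantity minimized in the proposition statement. Combined with the pointwise minimization principle from the first step, this produces both the necessity and sufficiency of the stated optimality condition on $\delta^*$.

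The main obstacle is handling the substitution $E[F(d, Z, \theta) \mid \theta, \lambda, X] = E[F(d, Z, \theta) \mid \theta, \lambda]$ cleanly, since $F$ depends on $\theta$ explicitly and one cannot simply appeal to a blanket conditional independence of $Z$ from $(\theta,X)$ given $\lambda$. The cleanest route is to fix $\theta$ first so that only the conditional law of $Z$ is at stake; the HCI Markov structure then guarantees this law depends only on $\lambda$, not on $(\theta, X)$, which completes the reduction. This is the step where the hidden variable $\lambda$ earns its keep relative to the conditionally independent setting of Section~\ref{sec:opt_cond_iid}, since conditional independence given $\theta$ is no longer available and must be replaced by conditional independence given $\lambda$.
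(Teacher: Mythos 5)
Your proof follows essentially the same route as the paper's: reduce to pointwise minimization of $E[F(d,Z,\theta)\mid X]$ for each value of $X$, apply the tower property conditioning on $(\theta,\lambda)$, use the conditional independence structure to replace the inner expectation by $a(\theta,\lambda,d)$, and express the outer expectation as an integral against $p(\theta,\lambda\mid X)$. Your added care in justifying $E[F(d,Z,\theta)\mid\theta,\lambda,X]=E[F(d,Z,\theta)\mid\theta,\lambda]$ via the HCI Markov chain (since the stated assumption is only $Z\perp X$ given $\lambda$) is a correct refinement of a step the paper passes over implicitly, not a different argument.
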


\begin{proof}
The proof is similar to the proof of Proposition \ref{prop1} with the inclusion of $\lambda$. The minimization of $E[F(\delta(X), Z, \theta)]$ over all $\delta \in \Delta$ is equivalent to requiring that $\delta(X)$ minimize $E[F(d, Z, \theta)|X]$, over all $d \in \{1,\cdots,D\}$, with probability 1. The expression being minimized can be re-written as 
$$E[E[F(d,Z,\theta)|\theta,\lambda,X]|X]$$

which by conditional independence of $X$ and $Z$ given $\lambda$, is equal to

\begin{equation}
E[E[F(d,Z,\theta)|\theta,\lambda,X]|X]=\int_{\theta,\lambda} E[F(d,Z,\theta)|\theta,\lambda]p(\theta,\lambda|X)d\theta d\lambda
\end{equation} 
Therefore, conditional independence decouples the design of $\delta^*(X)$ from $Z$, i.e., $\delta^*(X)$ depends on $Z$ only through $a(\theta,\lambda)$.
\end{proof}

We now use the above result to derive the optimality conditions for the quantizers when observations are conditionally dependent. 

\begin{proposition}
\label{prop2_dep}
Fix $i$ and suppose that $\gamma_j \in \Gamma_j$ has been fixed for all $j \neq i$. Then $\gamma_i$ minimizes $J(\gamma)$ over the set $\Gamma_i$ only if
\begin{eqnarray}
\label{prop2_opt_dep}
\gamma_i(Y_i)=\argmin_{d=1,\cdots,D_i}\int_{\theta,\lambda} a(\theta,\lambda,d)p(\theta,\lambda|Y_i)d\theta d\lambda \\
\text{with probability 1,}\nn
\end{eqnarray}
where for any $\theta$, $\lambda$, and $d$,
\begin{align}
a(\theta,\lambda,d)=E[C(U_0,U_1,\cdots,U_{i-1},d,U_{i+1},\cdots,U_N,\theta)|\theta,\lambda] 
\end{align}
and where each $U_i$, $i \neq 0$ is a random variable defined by $U_i = \gamma_i(Y_i)$ and $U_0 = \gamma_0(U_1, \cdots , U_{i-1}, d, U_{i+1},\cdots,U_N)$.
\end{proposition}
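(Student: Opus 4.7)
The plan is to mirror the proof of Proposition~\ref{prop2}, replacing the invocation of Proposition~\ref{prop1} with its dependent-observation counterpart Proposition~\ref{prop1_dep}. First I would reformulate the optimization: with every $\gamma_j$, $j\neq i$, held fixed, the Bayesian risk $J(\gamma)$ reduces to a functional of $\gamma_i$ alone, namely
$$
E[C(U_0,U_1,\ldots,U_{i-1},\gamma_i(Y_i),U_{i+1},\ldots,U_N,\theta)],
$$
where $U_j=\gamma_j(Y_j)$ for $j\neq i$, $j\neq 0$, and $U_0=\gamma_0(U_1,\ldots,U_{i-1},\gamma_i(Y_i),U_{i+1},\ldots,U_N)$ is a deterministic function of the remaining $U_j$'s. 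Note that in this section the FC makes no observation of its own, so $Y_0$ does not appear.

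Next I would make the correspondence with Proposition~\ref{prop1_dep} explicit. Set $X=Y_i$, $\delta=\gamma_i$, $d=\gamma_i(X)$, and $Z=(U_1,\ldots,U_{i-1},U_{i+1},\ldots,U_N)$. Then
$$
F(d,Z,\theta)=C\bigl(\gamma_0(U_1,\ldots,U_{i-1},d,U_{i+1},\ldots,U_N),U_1,\ldots,U_{i-1},d,U_{i+1},\ldots,U_N,\theta\bigr),
$$
so the functional to be minimized is exactly $E[F(\delta(X),Z,\theta)]$, and the quantity $a(\theta,\lambda,d)=E[F(d,Z,\theta)\mid\theta,\lambda]$ specified in the proposition is precisely the one appearing in the hypothesis of Proposition~\ref{prop1_dep}.

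The crucial step is verifying the hypothesis of Proposition~\ref{prop1_dep}, namely that $Z$ is independent of $X$ conditional on $\lambda$. This is exactly where the HCI framework does its work: since $p(\mathbf{y}\mid\lambda)=\prod_{j=1}^N p(y_j\mid\lambda)$, the random variable $Y_i$ is conditionally independent of $(Y_j)_{j\neq i}$ given $\lambda$. Because each $U_j$ for $j\neq i$ is a deterministic function of $Y_j$, $Z$ is a deterministic function of $(Y_j)_{j\neq i}$, and hence $Z$ inherits the required conditional independence from $X=Y_i$ given $\lambda$. I expect this to be the main subtlety of the argument—one has to be careful to condition on $\lambda$ rather than on $\theta$ (the observations are \emph{not} conditionally independent given $\theta$, which is exactly why the naive analogue of Proposition~\ref{prop2} fails in the dependent setting).

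Once the hypothesis is checked, applying Proposition~\ref{prop1_dep} yields
$$
\gamma_i(Y_i)=\argmin_{d=1,\ldots,D_i}\int_{\theta,\lambda} a(\theta,\lambda,d)\,p(\theta,\lambda\mid Y_i)\,d\theta\,d\lambda
$$
with probability one, which is the stated conclusion. I would close by remarking that, as with Proposition~\ref{prop2}, these are only necessary (person-by-person) conditions, so a PBPO iteration analogous to the conditionally independent case is the natural numerical tool for searching for a joint optimum.
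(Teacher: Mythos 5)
Your proposal is correct and follows essentially the same route as the paper: reduce the problem to minimizing $E[C(U_0,U_1,\ldots,U_{i-1},\gamma_i(Y_i),U_{i+1},\ldots,U_N,\theta)]$ with the other rules fixed, identify $X=Y_i$, $Z=(U_1,\ldots,U_{i-1},U_{i+1},\ldots,U_N)$, and $F(d,Z,\theta)$ with the cost, and invoke Proposition~\ref{prop1_dep}. Your explicit check that $Z$ inherits conditional independence from the $Y_j$'s given $\lambda$ under the HCI factorization is a point the paper leaves implicit, but it does not change the argument.
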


\begin{proof}
Observe that the minimization is of 
$$E[C(U_0, U_1, \cdots , U_{i-1}, \gamma_i(Y_i), U_{i+1}, \cdots , U_N,\theta)|\theta],$$
over $\gamma_i \in \Gamma_i$ where $U_0 = \gamma_0(U_1,\cdots, U_{i-1},\gamma_i(Y_i),U_{i+1}, \cdots , U_N)$. This is of the form considered in Proposition~\ref{prop1_dep} where $X=Y_i$, $d =\gamma_i(X) = \gamma_i(Y_i)$, $Z$ is the random vector given by $Z = (U_1,\cdots, U_{i-1}, U_{i+1},\cdots, U_N)$ and
$F(d, Z, \theta) = C(U_0, U_1, \cdots , U_{i-1},\gamma_i(Y_i), U_{i+1}, \cdots , U_N, \theta)$. The result follows from Proposition~\ref{prop1_dep}. 
\end{proof}

Proposition \ref{prop2_dep} is similar to Proposition \ref{prop2} and provides the necessary conditions for optimal quantizers for an arbitrary cost function $C(\cdot)$. We would like to note that the other results derived in the case of conditionally independent observations may not always be true when the observations are dependent. For example, when the observations are dependent, it can be easily seen that identical quantizers are not optimal in general. Consider the following simple example: there are $N=2^n-1$ sensors in the network which send binary quantized version of their observations to the FC. The local sensor observation model is given as follows:
$$y_i=\theta+v_i$$
where $\theta\in[-1,1]$ and $v_i=\delta(v-v_0)$ for all $i$. In other words, the single-peak noise is perfectly correlated across sensors. When all sensors use an identical quantizer $\gamma$, the quantized observation received from every sensor is the same (say all 1). On the other hand, we can easily design non-identical quantizers which provide additional information as follows: split the region $[-1,1]$ into $2^n$ equal regions, and the sensor $i$ uses a threshold quantizer to test whether $\theta$ lies in the first $i$ regions or not. In this way, we can determine the exact region among the $2^n$ regions where $\theta$ lies. Therefore, identical quantizers are not optimal in this example when observations are dependent. We have also shown in Sec.~\ref{sec:rate} that binary quantizers are not optimal when observations are correlated. Study on the optimal quantizer design for dependent observations will be considered in our future work.

\section{Conclusion}
\label{sec:conc}
In this work, we considered the problem of quantizer design for distributed estimation under the Bayesian criterion.  We showed that for conditionally unbiased efficient estimators, when all the sensors have the same number of decision regions, identical quantizers are optimal. Considering a communication rate constraint on the network, we derived the conditions for the optimality of binary quantizers. We have shown that when the observations are Gaussian, identical binary quantizers are optimal in the low SNR regime. For the location parameter estimation problem with a given prior distribution, we have found the optimal binary quantizer by solving a differential equation. We have found the sufficient condition on the noise distribution for which the threshold quantizers attain the performance limit. By relaxing the assumption of conditionally independent observations at the sensors, we also derived the optimality conditions for quantizers with conditionally dependent observations. In the future, we will further study the open problem of quantizer design in a distributed estimation framework with dependent observations.

\bibliographystyle{IEEEtran}
\bibliography{Conf,Book,Journal}

\end{document}